\newtheorem{Theorem}{Theorem}
\newtheorem{Lemma}{Lemma}
\newtheorem{Problem}{Problem}
\newtheorem{Remark}{Remark}
\newtheorem{Assumption}{Assumption}
	\tikzstyle{frame} = [draw, -latex]
	\tikzstyle{line} = [draw]
	\tikzstyle{line2} = [draw, dashdotted]
	\tikzstyle{line3} = [draw, dashed]
	\tikzstyle{line3UD} = [draw, dashed]
	\tikzstyle{place} = [circle, draw=black, fill=white, thick, inner sep=2pt, minimum size=1mm]
	\tikzstyle{place2} = [circle, draw=black, fill=black, thick, inner sep=2pt, minimum size=1mm]
	\tikzstyle{placeRed} = [circle, draw=red, fill=red, thick, inner sep=2pt, minimum size=1mm]
	\tikzstyle{vertex} = [circle, draw=black, fill=black, thick, inner sep=2pt, minimum size=1mm]
\def\algbackskip{\hskip-\ALG@thistlm}
\title{\LARGE \bf Distributed MPC-based Coordination of Traffic Perimeter and Signal Control: A Lexicographic Optimization Approach}
\author{Viet Hoang Pham$^{1}$ and Hyo-Sung Ahn$^{2}$, {\it Senior Member,~IEEE} 
\thanks{This work was supported by the National Research Foundation of Korea (NRF) under the grant NRF- 2022R1A2B5B03001459.}
\thanks{\small $^{1}$Computer Science Department, Faculty of Information Technology, Posts and Telecommunications Institute of Technology, Hanoi, Vietnam. E-mail:  {vietpx@ptit.edu.vn}.}
\thanks{\small $^{2}$Department of Mechanical and Robotics Engineering, Gwangju Institute of Science and Technology, Gwangju, Korea. E-mails: {hyosung@gist.ac.kr}.}
}
\begin{document}
%\linenumbers
%%%%%%%%%%%%%%%%%%%%%%%%%%%%%%%%%%%%%%%%%%%%%%%%%%%%%%%%%%%%%%%%%%%%%%%%%%%%%%%%%%%%%%%%%%
%\pagewiselinenumbers
\maketitle 
\thispagestyle{empty}
\pagestyle{empty}
%%%%%%%%%%%%%%%%%%%%%%%%%%%%%%%%%%%%%%%%%%%%%%%%%%%%%%%%%%%%%%%%%%%%%%%%%%%%%%%%%%%%%%%%%%

%%%%%%%%%%%%%%%%%%%%%%%%%%%%%%%%%%%%%%%%%%%%%%%%%%%%%%%%%%%%%%%%%%%%%%%%%%%%%%%%%%%%%%%%%%
\begin{abstract}
%%%%%%%%%%%%%%%%%%%%%%%%%%%%%%%%%%%%%%%%%%%%%%%%%%%%%%%%%%%%%%%%%%%%%%%%%%%%%%%%%%%%%%%%%%
This paper introduces a comprehensive strategy that integrates traffic perimeter control with traffic signal control to alleviate congestion in an urban traffic network (UTN). The strategy is formulated as a lexicographic multi-objective optimization problem, starting with the regulation of traffic inflows at boundary junctions to maximize the capacity while ensuring a smooth operation of the UTN. Following this, the signal timings at internal junctions are collaboratively optimized to enhance overall traffic conditions under the regulated inflows. The use of a model predictive control (MPC) approach ensures that the control solution adheres to safety and capacity constraints within the network.
To address the computational complexity of the problem, the UTN is divided into subnetworks, each managed by a local agent. A distributed solution method based on the alternating direction method of multipliers (ADMM) algorithm is employed, allowing each agent to determine its optimal control decisions using local information from its subnetwork and neighboring agents. Numerical simulations using VISSIM and MATLAB demonstrate the effectiveness of the proposed traffic control strategy.
%%%%%%%%%%%%%%%%%%%%%%%%%%%%%%%%%%%%%%%%%%%%%%%%%%%%%%%%%%%%%%%%%%%%%%%%%%%%%%%%%%%%%%%%%%
\end{abstract}
%%%%%%%%%%%%%%%%%%%%%%%%%%%%%%%%%%%%%%%%%%%%%%%%%%%%%%%%%%%%%%%%%%%%%%%%%%%%%%%%%%%%%%%%%%
%\begin{keywords}
%Distributed Optimization, Model Predictive Control, Urban Traffic Network, Alternating Direction Method of Multipliers (ADMM), Lexicographic Optimization
%\end{keywords}

%%%%%%%%%%%%%%%%%%%%%%%%%%%%%%%%%%%%%%%%%%%%%%%%%%%%%%%%%%%%%%%%%%%%%%%%%%%%%%%%%%%%%%%%%%
\section{Introduction}
%%%%%%%%%%%%%%%%%%%%%%%%%%%%%%%%%%%%%%%%%%%%%%%%%%%%%%%%%%%%%%%%%%%%%%%%%%%%%%%%%%%%%%%%%%
In recent years, urban traffic demand has escalated substantially due to rapid population growth and increased urbanization. However, the expansion of road infrastructure in urban areas has remained minimal or is infeasible due to limited space. As a result, the development of advanced traffic control strategies for urban traffic networks (UTNs) has become increasingly critical. Numerous strategies have been proposed to optimize the utilization of existing traffic infrastructure, with the aim of mitigating congestion and improving traffic flow efficiency \cite{MarkosPapageorgiou2003, LeiChen2016}.
For example, TRANSYT-7F model \cite{KCourage1991} utilizes macroscopic modeling and historical traffic data to generate coordinated, offline control strategies for intersections, optimizing overall network throughput and reducing delays. In an effort to enhance the performance of traffic control systems, several coordinated, traffic-responsive approaches have been introduced, including SCOOT \cite{PBHunt1982}, SCATS \cite{PRLowrie1982}, PRODYN \cite{JLFarges1983}, RHODES \cite{PMirchandani1998}, and OPAC \cite{NHGartner1983}. These systems leverage real-time traffic data obtained from road sensors to optimize traffic signal timing based on short-term traffic forecasts. Although these methods have shown effectiveness under certain conditions, they often suffer from exponential computational complexity. Furthermore, they fail to incorporate critical safety and physical constraints related to roads and junctions (e.g., they cannot guarantee that the number of vehicles on a road will always remain below its capacity). As a result, these methods are primarily effective in low to moderate traffic demand scenarios, with significant performance degradation observed during periods of high traffic demand.

Unlike the above methods \cite{KCourage1991, PBHunt1982, PRLowrie1982, JLFarges1983, PMirchandani1998, NHGartner1983}, Model Predictive Control (MPC)-based traffic control approach is uniquely capable of guaranteeing the operational constraints essential for urban traffic network management. Recent advancements in sensing and communication technologies have also spurred the development of MPC-based traffic control systems, which have attracted significant attention in the field \cite{BaoLinYe2019, RRNegenborn2008, KonstantinosAmpountolas2009, KonstantinosAmpountolas2010, TamasTettamanti2014, SteliosTimotheou2015, BaoLinYe2016, ZhaoXhou2017, PietroGrandinetti2018, ShuLin2011, NaWu2019, NaWu2020}. These MPC-based methods utilize state-space models to predict future traffic behavior and formulate optimal control strategies within a rolling horizon framework. By incorporating real-time traffic conditions and continuously updating model parameters at each control interval, MPC-based approach offers improved accuracy and reliability in traffic signal control, particularly under dynamic and high-demand traffic conditions.
In MPC traffic control approach, the store-and-forward model \cite{KonstantinosAmpountolas2009, KonstantinosAmpountolas2010, BaoLinYe2016}, the cell transmission model \cite{SteliosTimotheou2015, PietroGrandinetti2018}, and the S model \cite{ShuLin2011, NaWu2019, NaWu2020} are widely employed due to their simplicity and sufficient accuracy in describing traffic evolution over time and space. Depending on the employed traffic model, the MPC traffic signal control problems can be formulated as either convex optimization problems \cite{RRNegenborn2008, KonstantinosAmpountolas2009, KonstantinosAmpountolas2010, TamasTettamanti2014, SteliosTimotheou2015, BaoLinYe2016, ZhaoXhou2017, PietroGrandinetti2018} or mixed-integer linear problems \cite{ShuLin2011, NaWu2019, NaWu2020}. Convex optimization problems require less computational load to solve and are shown to be more feasible for application in large UTNs.

Depending on the underlying methods used to solve optimal control problems, MPC-based traffic signal control systems can be categorized as either centralized or distributed systems. In centralized systems, a single powerful computational unit is required to determine the optimal traffic signal timing plans for all signalized junctions in the UTN. However, the computational burden may become overwhelming for a single controller when the network consists of many roads and junctions. The MPC-based traffic signal control problem becomes large-scale, involving a substantial number of variables and constraints. Additionally, due to the spatial distribution of the UTN, collecting traffic data to formulate the MPC-based problem is challenging due to conflicts and unreliability in data exchange flows. Consequently, both distributed collection of traffic model parameters and distributed decision-making for traffic signal control are highly desirable.
In distributed systems, an UTN is considered as a union of multiple subnetworks, with each subnetwork managed by a local controller, also referred to as an agent in a multi-agent system. By decomposing a centralized MPC-based traffic signal control problem into a network of interconnected subproblems, various distributed optimization methods are implemented to reduce the total execution time through parallel computing. The dual-decomposition-based method \cite{StephenBoyd2004, RuggeroCarli2013} is the most widely applied approach for solving distributed MPC-based traffic signal control problems \cite{RRNegenborn2008, BaoLinYe2016,ZhaoXhou2017, PietroGrandinetti2018, VietHoangPham2022}. However, it is required that the cost function to be minimized must be strictly quadratic. In contrast, the alternating direction method of multipliers (ADMM) algorithm \cite{StephenBoyd2011, BrendanODonoghue2016, VietHoangPham2023} exhibits more favorable convergence properties for general convex optimization problems.

Even though MPC-based traffic signal control methods can provide optimal traffic signal timing plans, traffic congestion may become inevitable when the traffic inflows entering the UTN from boundary junctions exceed the capacity of the existing road infrastructure. In such cases, the MPC-based traffic signal control problems become unfeasible, indicating that no solution satisfies all the physical and safety constraints. To protect UTNs from becoming oversaturated, several gating strategies based on the macroscopic fundamental diagram (MFD) \cite{KonstantinosAboudolas2013, NikolasGeroliminis2013, JackHaddad2014, MohammadHajiahmadi2014, MehdiKeyvanEkbatani2015, HengDing2017, QianChen2020} have been proposed. These works focus on keeping network accumulation (i.e., the total number of vehicles in the UTN) around a predetermined critical value. However, the MFD is not applicable under heterogeneous traffic conditions and is highly sensitive to changes in traffic demands and the implemented traffic signal timing plan. As a result, these gating controllers can only prevent saturated conditions for predetermined traffic signal timing plans of internal junctions, but they do not maximize the capacity of the UTN.
red{In traffic control literature, several published studies have implemented perimeter control based on MFD combined with adaptive traffic signal control methods, such as Max Pressure (MP) \cite{DimitriosTsitsokas2023, HaoLiu2024}, to increase the capacity and throughput of traffic networks. In these studies, perimeter controller focuses on maintaining the accumulated number of vehicles in the network near critical points (determined through MFD analysis) to allow MP to operate under ideal conditions. However, this approach is only suitable for scenarios with low or moderate traffic demands. In cases of saturated conditions or even unsaturated conditions with high traffic demand, it results in long queues at boundary junctions. Here, the saturated state refers to the situation where the traffic network has reached its maximum capacity for handling traffic flow without causing prolonged congestion.}

In order to cope with various traffic conditions, this paper follows a MPC-based approach.
Unlike other MPC-based traffic signal control works that consider all exogenous traffic inflows as predetermined parameters, we assumes that the exogenous traffic flows entering the UTN from perimeter can be controlled, e.g., by adjusting the traffic signals at the boundary junctions. Following the key concept of traffic perimeter control, our approach also relies on holding vehicles at the perimeter to avoid internal congestion, but rather than imposing a predetermined bound, it seeks to minimize the number of vehicles retained at the perimeter while still safeguarding smooth network operation.
The proposed control strategy is designed to achieve two main objectives: 1) \textbf{Perimeter control:} maximize the total exogenous traffic inflows while ensuring a smooth operation of the UTN; 2) \textbf{Traffic signal control:} determine the optimal signal splits for all internal junctions to improve traffic conditions within the UTN while maintaining the maximum allowable exogenous traffic inflows. The resulting control problem is formulated using a lexicographic optimization approach. To achieve this, we modify the store-and-forward modeling method to formulate the traffic perimeter and signal control problems as convex optimization ones.
Lexicographic optimization, a subfield of multi-objective optimization, has garnered significant attention in the control system community \cite{DefengHe2021, AndreaPozzi2022, YaqingJv2023}. Numerous control problems, particularly in process systems, involve multiple, often conflicting objectives. Typically, these objectives have different levels of importance and are expressed through distinct cost functions. By incorporating the principles of lexicographic optimization into the MPC framework, a lexicographic MPC controller is developed by solving a hierarchy of single-objective MPC problems in a sequential manner. Compared to other multi-objective optimal methodologies, lexicographic MPC demonstrates superior handling of prioritized objectives.

Since the UTNs considered in traffic control usually include many roads and junctions, the sizes of their corresponding optimal control problems may be large.
To cope with this issue, we follow a distributed approach to solve both two MPC-based traffic control problems: perimeter control and signal control. Assuming that the UTN is controlled by multiple cooperative agents, we develop an iterative update scheme for each agent based on a proximal Alternating Direction Method of Multipliers (ADMM) algorithm. This scheme relies solely on local information and guarantees agents to cooperatively find the precise optimal solutions.

Our main contributions are summarized as follows:
\begin{itemize}
\item This study is the first to combine the concept of perimeter control and traffic signal control into a lexicographic MPC-based traffic control strategy to maximize the capacity of the UTN.
\item Presentation of the lexicographic MPC-based traffic control strategy is provided through employing a modified store-and-forward modeling method to formulate an MPC-based traffic perimeter control problem and an MPC-based traffic signal control problem.
\item A fully distributed solution method is developed for determining optimal control decisions of the lexicographic MPC-based traffic control strategy.
\item The effectiveness of the proposed control strategy is illustrated through numerical simulations in VISSIM and MATLAB.
\end{itemize}

The remainder of the paper is organized as follows.
Section II introduces a modified store-and-forward model approach to formulate MPC-based traffic control problems. In Section III, we describe the coordination of traffic perimeter and signal control as a lexicographic multi-objective optimization problem. This section also provides a network decomposition method and give the distributed control problem statement.
Then, we reformulate distributed versions of the MPC-based traffic control problems in Section IV. These problems are then solved by applying a distributed method based ADMM, which is presented in Appendix-A. Some simulation results are provided and analyzed in Section V to verify the effectiveness of our proposed traffic control strategy. Finally, Section VI concludes this paper.
%%%%%%%%%%%%%%%%%%%%%%%%%%%%%%%%%%%%%%%%%%%%%%%%%%%%%%%%%%%%%%%%%%%%%%%%%%%%%%%%%%%%%%%%%%
%%%%%%%%%%%%%%%%%%%%%%%%%%%%%%%%%%%%%%%%%%%%%%%%%%%%%%%%%%%%%%%%%%%%%%%%%%%%%%%%%%%%%%%%%%
\subsection*{Notations}
%%%%%%%%%%%%%%%%%%%%%%%%%%%%%%%%%%%%%%%%%%%%%%%%%%%%%%%%%%%%%%%%%%%%%%%%%%%%%%%%%%%%%%%%%%
We use $\mathbb{R}$, $\mathbb{R}_{-}$, $\mathbb{R}^n$, and $\mathbb{R}^{m \times n}$ to denote the set of real numbers, the set of non-positive real numbers, the set of $n$-dimension real vectors, and the set of $m \times n$ matrices, respectively.
Denote by $\textbf{1}_n$ and $\textbf{0}_n$ the vectors in $\mathbb{R}^{n}$ whose all elements are $1$ and $0$, respectively.
Let $\textbf{I}_n$ be the identity matrix in $\mathbb{R}^{n \times n}$ and $\textbf{O}_{m \times n}$ be the zero matrix in $\mathbb{R}^{m \times n}$. When the dimensions are clear, the subscripts of the matrices $\textbf{I}_n, \textbf{O}_{m \times n}$, and the vector $\textbf{1}_n, \textbf{0}_n$ are ignored.
%For a given set $\mathcal{A}$, $|\mathcal{A}|$ represents the cardinality of this set. When the set $\mathcal{A}$ has a finite number of vectors in $\mathbb{R}^n$, i.e., 
For a given set of $m$ vectors, $\mathcal{A} = \{\textbf{a}_1, \textbf{a}_2, \dots, \textbf{a}_m\}$, we use $\textrm{col }\mathcal{A}$ to denote the following column vector
\[\textrm{col }\mathcal{A} = \textrm{col}\{\textbf{a}_1, \textbf{a}_2, \dots, \textbf{a}_m\} = [\textbf{a}_1^T, \textbf{a}_2^T, \dots, \textbf{a}_m^T]^T.\]
Let $\mathcal{M} = \{\textbf{M}_1, \textbf{M}_2, \dots, \textbf{M}_m\}$ be the set of $m$ matrices, we use $\textrm{blkdiag }\textbf{M}$ to denote the block diagonal matrix created by aligning $\textbf{M}_1, \textbf{M}_2, \dots, \textbf{M}_m$ along its diagonal.
%%%%%%%%%%%%%%%%%%%%%%%%%%%%%%%%%%%%%%%%%%%%%%%%%%%%%%%%%%%%%%%%%%%%%%%%%%%%%%%%%%%%%%%%%%
%%%%%%%%%%%%%%%%%%%%%%%%%%%%%%%%%%%%%%%%%%%%%%%%%%%%%%%%%%%%%%%%%%%%%%%%%%%%%%%%%%%%%%%%%%
\section{Traffic Model for MPC approach}
%%%%%%%%%%%%%%%%%%%%%%%%%%%%%%%%%%%%%%%%%%%%%%%%%%%%%%%%%%%%%%%%%%%%%%%%%%%%%%%%%%%%%%%%%%
\subsection{Graph Representation}
%%%%%%%%%%%%%%%%%%%%%%%%%%%%%%%%%%%%%%%%%%%%%%%%%%%%%%%%%%%%%%%%%%%%%%%%%%%%%%%%%%%%%%%%%%
In examining the operation of an UTN, two primary entities are considered: junctions and roads. A junction is a location where traffic flows intersect from different directions, while a road connects two junctions and allows vehicles to move in a single, specified direction. Junctions can be further categorized based on their locations within the UTN as follows:
\begin{enumerate}
\item \textit{boundary} junctions are located at the periphery of the UTN, serving as entry and/or exit points for vehicles moving into or out of the network;
\item \textit{internal} junctions are situated within the interior of the UTN.
\end{enumerate}
Traffic flows at a junction are regulated by traffic signals to prevent conflicts between vehicles moving in different directions. These signals are organized into distinct phases, with each phase controlling a set of traffic flows that operate simultaneously. On a given road, different lanes may accommodate separated traffic flows, which can be assigned to different traffic signal phases. We define a \textit{road link} as a group of lanes on a road that are governed by the same traffic signal phase.
To have an illustrative representation of the UTN, we use a node to represent a junction and use an edge to represent a road link. Then the UTN is described by a directed graph $\mathcal{T} = (\mathcal{J}, \mathcal{R})$ with the nodes set $\mathcal{J}$ and the edge set $\mathcal{R}$.
Let $\mathcal{J}^I$ be the set of internal junctions, and $\mathcal{J}^B$ be the set of boundary junctions. So, we have $\mathcal{J} = \mathcal{J}^I \cup \mathcal{J}^B$ and $\mathcal{J}^I \cap \mathcal{J}^B = \emptyset$.

For a road link $z \in \mathcal{R}$, we use $\sigma(z)$ and $\tau(z)$ to denote its source junction and destination junction, respectively. Vehicles move from $\sigma(z)$ to $\tau(z)$ in the road link $z$. It is clear that $\sigma(z), \tau(z) \in \mathcal{J}$.
Let $\mathbb{L} \subset \mathcal{R}\times\mathcal{R}$ be the set of pairs of road links where $(z, w) \in \mathbb{L}$ if and only if $\tau(z) = \sigma(w)$ and vehicles are allowed to leave the road link $z$ to enter the road link $w$ via the junction $\tau(z)$. When $(z, w) \in \mathbb{L}$, the road link $w$ is a downstream neighbor of the road link $z$ and $z$ is an upstream neighbor of $w$. Denote by $\mathcal{N}_z^-$ and $\mathcal{N}_z^+$ the set of downstream neighbors and the set of upstream neighbors of the road link $z$, respectively. We have the following mathematical formulation:
\[\mathcal{N}_z^- = \{w: (z, w) \in \mathbb{L}\},\textrm{ and } \mathcal{N}_z^+ = \{w: (w, z) \in \mathbb{L}\}.\]
For a junction $J_v \in \mathcal{J}$, we denote by $\mathcal{R}_{J_v}^{in}$ and $\mathcal{R}_{J_v}^{out}$ the sets of its incoming road links and outgoing road links, respectively.
\[\mathcal{R}_{J_v}^{in} = \{z: \tau(z) = J_v\},\textrm{ and } \mathcal{R}_{J_v}^{out} = \{z: \sigma(z) = J_v\}.\]
It is easy to see that $\mathcal{N}_z^- \subset \mathcal{R}_{J_v}^{out}$ if $z \in \mathcal{R}_{J_v}^{in}$, $\mathcal{N}_z^+ \subset \mathcal{R}_{J_v}^{in}$ if $z \in \mathcal{R}_{J_v}^{out}$ and $\mathcal{N}_z^- = \emptyset$ if $\tau(z) \in \mathcal{J}^B$, $\mathcal{N}_z^+ = \emptyset$ if $\sigma(z) \in \mathcal{J}^B$.

For any two different road links $z, w \in \mathcal{R}$, $w$ is reachable from $z$ or $z$ can reach to $w$ if there exists at least one directed sequence of road links $\{u_1, u_2, \dots, u_n\}$ such that $u_1 = z, u_n = w$ and $\tau(u_k) = \sigma(u_{k+1}) \in \mathcal{J}, k = 1, \dots, n-1$.
A road link $z \in \mathcal{R}$ is reachable from and can reach to itself.
In this paper, we focus on an UTN whose graph representation satisfies the following assumption.
\begin{Assumption}\label{as_graph}
For every road link $z \in \mathcal{R}$, it is reachable from at least one source road link $w \in \mathcal{R}$, where $\sigma(w) \in \mathcal{J}^B$, and it can reach to at least one destination road link $u \in \mathcal{R}$, where $\tau(u) \in \mathcal{J}^B$.
\end{Assumption}
%%%%%%%%%%%%%%%%%%%%%%%%%%%%%%%%%%%%%%%%%%%%%%%%%%%%%%%%%%%%%%%%%%%%%%%%%%%%%%%%%%%%%%%%%%
\subsection{Prediction model}
%%%%%%%%%%%%%%%%%%%%%%%%%%%%%%%%%%%%%%%%%%%%%%%%%%%%%%%%%%%%%%%%%%%%%%%%%%%%%%%%%%%%%%%%%%
Similar as in \cite{VietHoangPham2023}, this paper follows the store-and-forward modeling approach to formulate MPC-based traffic control problems. This approach was initially proposed by Gazis et al. \cite{DenosCGazis1963} and widely used in urban traffic control research \cite{ChristinaDiakaki2002, KonstantinosAmpountolas2009, KonstantinosAmpountolas2010, BaoLinYe2016}. In this paper, we introduce some notations for perimeter control.
%%%%%%%%%%%%%%%%%%%%%%%%%%%%%%%%%%%%%%%%%%%%%%%%%%%%%%%%%%%%%%%%%%%%%%%%%%%%%%%%%%%%%%%%%%
\subsubsection{Vehicle conservation equations}
%%%%%%%%%%%%%%%%%%%%%%%%%%%%%%%%%%%%%%%%%%%%%%%%%%%%%%%%%%%%%%%%%%%%%%%%%%%%%%%%%%%%%%%%%%
Let $n_z(t)$ be the number of vehicles within the road link $z$ at time $tT$ where $T$ is the interval between two control time steps and $t = 0, 1, \dots$ is the time index.
Assume that the number $n_{z}(t)$ of vehicles contained in the road link $z$ at time $tT$ is known. Then, the traffic states $n_{z}(t + k + 1|t)$ at the time $(t + k + 1)T$, for $ k = 0, 1, \dots$, can be predicted by the following conservation law equation
\begin{align}
&n_{z}(t+k+1|t) = n_{z}(t) + \sum\limits_{l = 0}^{k}\Biggl\{e_z^{in}(t+l|t) - e_z^{out}(t+l|t)\nonumber\\
&+ \sum\limits_{w \in \mathcal{N}_z^+}r_{wz}(t+l|t)f_w^d(t+l|t)
- f_z^d(t+l|t) + f_z^{u}(t+l|t)\Biggr\}.\label{eq_traffic_state_n}
\end{align}
In \eqref{eq_traffic_state_n}, all the traffic flows (i.e.,$f_{z}^d(t+l|t)$, $e_{z}^{in}(t+l|t)$, $e_z^{out}(t+l|t)$ and $f_z^{u}(t+l|t)$) and turning rates ($r_{wz}, \forall w \in \mathcal{N}_z^+$) are defined in the time interval $[(t+l)T, (t+l+1)T]$.
For more specific, $f_{z}^d(t+l|t)$ is the planned downstream traffic flow of the road link $z$, which is defined as the number of vehicles entering to the destination junction $\tau(z)$; $e_{z}^{in}(t+l|t)$ and $e_z^{out}(t+l|t)$ are respectively the estimated numbers of vehicles starting and ending their trips at the road link $z$;  $f_z^{u}(t+l|t)$ is the number of vehicles entering into the road link $z$ from one boundary junction. Here, we consider conventionally that $f_z^{u}(t+l|t) > 0$ only if $\sigma(z) \in \mathcal{J}^B$ and $f_z^{u}(t+l|t) = 0, \forall l \ge 0,$ otherwise. Fig. \ref{fig_model} depicts traffic flows for one road link. The estimated turning rate $r_{wz}(t+l|t)$ corresponds to the percentage of the downstream traffic flow from the road link $w$ entering to its downstream neighbor $z \in \mathcal{N}_w^-$.
It is clear that $r_{zw}(t+k|t) \ge 0, \forall w \in \mathcal{N}_z^-$, and $\sum_{w \in \mathcal{N}_z^-}r_{zw}(t+k|t) = 1, \forall k \ge 0$.

\begin{figure}[htb]
\centering
\scalebox{0.68}{\begin{tikzpicture}
\draw[-] (2.75,1.25) -- (5.25,1.25); \draw[-] (5.25,1.25) -- (5.25,0.5);
\draw[-] (2.75,2.75) -- (5.25,2.75); \draw[-] (5.25,2.75) -- (5.25,3.5);
\draw[-] (6.75,1.25) -- (6.75,0.5); \draw[-] (6.75,1.25) -- (7.5,1.25);
\draw[-] (6.75,2.75) -- (6.75,3.5); \draw[-] (6.75,2.75) -- (7.5,2.75);
\draw[dashed] (0.5,2) -- (1.5,2); \draw[dashed] (2.5,2) -- (5.5,2); \draw[dashed] (6.5,2) -- (7.5,2);
\draw[dashed] (6,0.5) -- (6,1.5); \draw[dashed] (6,2.5) -- (6,3.5);
\node at (2,2.1) [] (node_i) {\large $B_v$};
\node at (4,1.75) [] (link_z) {\large link $z$};
\node at (5.75,1.95) [] (link_z) {$f_z^d$};
\node at (2.5,1.2) [] (link_w2) {\large $f_{z}^u$};
\draw[->,red] (1,1.6) -- (3,1.6);
\draw[->,blue] (4.65,1.6) -- (5.75,1.6);
\draw[->,green] (3.45,0.75) -- (3.45,1.45);
\draw[->,green] (4.65,1.45) -- (4.65,0.75);
\node at (3.2,0.5) [] (dz) {\large $e_z^{in}$};
\node at (4.85,0.5) [] (sz) {\large $e_z^{out}$};
\node at (4.5,-0.15) [] (title) {\large Traffic flows in a source road link.};
\end{tikzpicture}}
\vspace{10pt}

\scalebox{0.68}{\begin{tikzpicture}
\draw[-] (1.25,1.25) -- (0.5,1.25); \draw[-] (1.25,1.25) -- (1.25,0.5);
\draw[-] (1.25,2.75) -- (0.5,2.75); \draw[-] (1.25,2.75) -- (1.25,3.5);
\draw[-] (2.75,1.25) -- (2.75,0.5); \draw[-] (2.75,1.25) -- (5.25,1.25); \draw[-] (5.25,1.25) -- (5.25,0.5);
\draw[-] (2.75,2.75) -- (2.75,3.5); \draw[-] (2.75,2.75) -- (5.25,2.75); \draw[-] (5.25,2.75) -- (5.25,3.5);
\draw[-] (6.75,1.25) -- (6.75,0.5); \draw[-] (6.75,1.25) -- (7.5,1.25);
\draw[-] (6.75,2.75) -- (6.75,3.5); \draw[-] (6.75,2.75) -- (7.5,2.75);
\draw[dashed] (0.5,2) -- (1.5,2); \draw[dashed] (2.5,2) -- (5.5,2); \draw[dashed] (6.5,2) -- (7.5,2);
\draw[dashed] (2,0.5) -- (2,1.5); \draw[dashed] (2,2.5) -- (2,3.5);
\draw[dashed] (6,0.5) -- (6,1.5); \draw[dashed] (6,2.5) -- (6,3.5);
\node at (2,2.1) [] (node_i) {\large $J_v$};
\node at (4,1.75) [] (link_z) {\large link $z$};
\node at (5.75,1.95) [] (link_z) {$f_z^d$};
\node at (1.65,3.25) [] (link_w1) {\large $f_{w_1}^d$};
\node at (0.65,1.65) [] (link_w2) {\large $f_{w_2}^d$};
\node at (2.35,0.65) [] (link_w3) {\large $f_{w_3}^d$};
\draw[-,blue] (1.75,3) -- (1.75,1.8); \draw[->,blue] (1.75,1.8) -- (3,1.8);
\draw[->,blue] (1,1.6) -- (3,1.6);
\draw[-,blue] (2.25,0.95) -- (2.25,1.4); \draw[->,blue] (2.25,1.4) -- (3,1.4);
\draw[->,blue] (4.65,1.6) -- (5.75,1.6);
\draw[->,green] (3.45,0.75) -- (3.45,1.45);
\draw[->,green] (4.65,1.45) -- (4.65,0.75);
\node at (3.2,0.5) [] (dz) {\large $e_z^{in}$};
\node at (4.85,0.5) [] (sz) {\large $e_z^{out}$};
\node at (4.0,-0.15) [] (title) {\large Traffic flows in a road link connecting two internal junctions.};
\end{tikzpicture}}
\caption{Traffic flows in a road link $z \in \mathcal{L}$ for two cases: (top) $\sigma(z) = B_v \in \mathcal{J}^B$; (bottom) $\sigma(z) = J_v \in \mathcal{J}^I$ and $\mathcal{N}_z^+ = \{w_1, w_2, w_3\}$. The blue arrows correspond to downstream traffic flows, the green arrows are exogenous traffic flows of the road link $z$, and the red arrows are the exogenous traffic inflows from the boundary junction.}\label{fig_model}
\end{figure}

Consider the source road link $z$, where $\sigma(z) \in \mathcal{J}^B$, the number of vehicles that arrive to the boundary junction $\sigma(z)$ and want to enter into $z$ is called the traffic demand of this road link. We use $d_z(t+k|t)$ to denote the estimated traffic demand in the time interval $[(t+k)T, (t+k+1)T]$. The queue length at the time $(t+k+1)T$ for the source road link $z$ is predicted as
\begin{equation}\label{eq_queue_demand}
q_z(t+k+1|t) = q_z(t) + \sum_{l = 0}^{k}d_z(t+l|t) - \sum_{l = 0}^{k}f_{z}^{u}(t+l|t).
\end{equation}
where $q_z(t)$ is the number of vehicles waiting at the boundary junction $\sigma(z) \in \mathcal{J}^B$ to enter the road link $z$ at the time $tT$. For simplicity of notations, we consider that $d_z(k) = q_z(k) = f_z^u(k) = 0$ for all $k \ge 0$ if $\sigma(z) \notin \mathcal{J}^B$.
%%%%%%%%%%%%%%%%%%%%%%%%%%%%%%%%%%%%%%%%%%%%%%%%%%%%%%%%%%%%%%%%%%%%%%%%%%%%%%%%%%%%%%%%%%
\subsubsection{Traffic signal in internal junctions}
%%%%%%%%%%%%%%%%%%%%%%%%%%%%%%%%%%%%%%%%%%%%%%%%%%%%%%%%%%%%%%%%%%%%%%%%%%%%%%%%%%%%%%%%%%
The traffic signal timing plan of a signalized junction $J_v \in \mathcal{J}^I$ is determined by a sequence of traffic signal phases and their assigned splits.
We use $\mathcal{P}_{J_v}$ to denote the set of traffic signal phases of this junction.
Denote by $g_{p}(t+k|t)$ the traffic signal split, i.e., the green time, assigned to the phase $p \in \mathcal{P}_{J_v}$ in the time interval $[(t+k)T,(t+k+1)T]$.
To control the UTN as a whole system, the operations of all junctions need to be synchronized as they share a common cycle.
A unified cycle allows synchronization of green phases across intersections, reducing stop-and-go movements and preventing spillbacks. If each junction operated with its own cycle, phase misalignments would accumulate, leading to congestion and loss of effective capacity. The advantages of synchronous operation over
asynchronous operation have been shown in \cite{LucasBarcelosDeOliveira2010}.
For simplicity, let the cycle of all junctions be equal to the control time interval $T$.
The sequence of traffic signal phases in every junction is assumed to be predetermined suitably with its own structure.
The lost time $L_{J_v}$ is defined as the sum of time lengths for clearing vehicles between consecutive traffic signal phases in a junction $J_v \in \mathcal{J}^I$.
Then the following constraint needs to be satisfied
\begin{equation}\label{eq_signal_limit1}
\sum_{p \in \mathcal{P}_{J_v}} g_{p}(t+k|t) \le T - L_{J_v} := T_{J_v}.
\end{equation}
In addition, the traffic signal split of one phase is necessarily non-negative as
\begin{equation}\label{eq_signal_limit2}
g_p(t+k|t) \ge 0, \forall p \in \mathcal{P}_{J_v}.
\end{equation}
Let $\mathcal{P}_{z} \subsetneq \mathcal{P}_{J_v}$ be the set of traffic signal phases which give the right of way to the road link $z$ where $\tau(z) = J_v \in \mathcal{J}^I$.
Then the green time assigned to the road link $z$ in the time interval $[(t+k)T,(t+k+1)T]$ is $g_z(t+k|t) \le \sum_{p \in \mathcal{P}_{z}} g_{p}(t+k|t)$.
Since there may be more than one road link that can be activated in one traffic signal phase, we distinguish the green time $g_z(t+k|t)$ from the splits of phases in $\mathcal{P}_z$. Then even if two road links $w, z \in \mathcal{R}_{J_v}^{in}$ have the same set of the assigned traffic signal phases, i.e., $\mathcal{P}_{z} \equiv \mathcal{P}_{w}$, their assigned green time lengths can be different. This setup enhances the flexibility of the traffic model.
For example, two road links $z$ and $w$ are in the same traffic signal phase $p$ but one downstream neighbor of the road link $z$ is in saturated condition and should not receive more vehicles while all downstream neighbors of the road link $w$ have enough free space, it is allowed to set $g_z(t) = 0$ and $g_w(t) > 0$.
\begin{Remark}
In Appendix-C, we provide an illustrative example to make clear the notations for UTN graph representation, traffic signal phase, and the network decomposition (will be presented in Subsection III-C).
\end{Remark}
%%%%%%%%%%%%%%%%%%%%%%%%%%%%%%%%%%%%%%%%%%%%%%%%%%%%%%%%%%%%%%%%%%%%%%%%%%%%%%%%%%%%%%%%%%
\subsection{Traffic flows}
%%%%%%%%%%%%%%%%%%%%%%%%%%%%%%%%%%%%%%%%%%%%%%%%%%%%%%%%%%%%%%%%%%%%%%%%%%%%%%%%%%%%%%%%%%
In store-and-forward modeling approach \cite{DenosCGazis1963, ChristinaDiakaki2002}, the downstream traffic flow of the road link $z \in \mathcal{R}$ is determined as $f_{z}^d(t+k|t) = S_{z} g_z(t+k|t)$ where $S_{z}$ is the saturation flow of the road link $z$.
This implicitly assumes that the green time intervals are always fully utilized, i.e., there are enough vehicles awaiting in the road link $z \in \mathcal{R}$ in its green time duration.
To guarantee this requirement, in this paper, we assume that the time interval $T$ is chosen to be small enough and require the following constraint. 
\begin{equation}\label{eq_predictedTrafficFlow}
0 \le f_z^d(t+k|t) \le n_z(t+k|t) + f_z^u(t+k|t) + e_z(t+k|t),
\end{equation}
where $e_z(t+k|t) := e_z^{in}(t+k|t) - e_z^{out}(t+k|t)$.
In other words, the downstream traffic flow departing from the road link $z \in \mathcal{L}$ in the time interval $[(t+k)T, (t+k+1)T]$ is less than the summation of vehicles in this road link at the time $(t+k)T$ and the difference of its exogenous traffic flows in the time interval $[(t+k)T, (t+k+1)T]$.

Let $\overline{n_z}$ be the maximum number of vehicles the road link $z \in \mathcal{R}$ can contain without traffic jam. So, to avoid traffic jam, the upstream traffic flows entering into the road link $z$ are necessary to satisfy the constraint (\ref{eq_upstream_limit}a) if $\sigma(z) \in \mathcal{J}^I$ or the constraint (\ref{eq_upstream_limit}b) if $\sigma(z) \in \mathcal{J}^B$ as follows.
\begin{subequations}\label{eq_upstream_limit}
\begin{gather}
\sum\limits_{w \in \mathcal{N}_z^+}r_{wz}(t+k|t)f_{w}^d(t+k|t) \le \overline{n_z} - n_z(t+k|t) - e_z(t+k|t),\\
f_z^u(t+k|t) \le \overline{n_z} - n_z(t+k|t) - e_z(t+k|t).
\end{gather}
\end{subequations}
The constraints \eqref{eq_predictedTrafficFlow} and \eqref{eq_upstream_limit} are well-studied in the cell transmission model (CTM).
In addition, we define the following constraint
\begin{equation}\label{eq_signal_roadlink}
f_z^d(t+k|t) \le S_z\sum_{p \in \mathcal{P}_{z}} g_{p}(t+k|t), \textrm{ if } \tau(z) \in \mathcal{J}^I.
\end{equation}
for the relation between the downstream traffic flow of the road link $z$ and the total green time of its corresponding traffic signal phases.

Consider a source road link $z$ where $\sigma(z) \in \mathcal{J}^B$. Beside the constraint (\ref{eq_upstream_limit}b) for the exogenous traffic inflow, it is natural to add the following constraint for its queue length at the boundary junction.
\begin{equation}\label{eq_queue_limit}
q_z(t+k+1|t) \ge 0,
\end{equation}
for all $z$ where $\sigma(z) \in \mathcal{J}^B$ and $k \ge 0$.

For a destination road link $z$ where $\tau(z) \in \mathcal{J}^B$, as it can connect to another traffic network, we assume that there is an upper limitation for its downstream traffic flow in each time interval.
\begin{equation}\label{eq_outstream_roadlink}
f_z^d(t+k|t) \le \overline{f_{B_v}}(t+k|t), \textrm{ if } \tau(z) = B_v \in \mathcal{J}^B,
\end{equation}
where $\overline{f_{B_v}}(t+k|t)$ is given for all $k \ge 0$.

%%%%%%%%%%%%%%%%%%%%%%%%%%%%%%%%%%%%%%%%%%%%%%%%%%%%%%%%%%%%%%%%%%%%%%%%%%%%%%%%%%%%%%%%%%
%%%%%%%%%%%%%%%%%%%%%%%%%%%%%%%%%%%%%%%%%%%%%%%%%%%%%%%%%%%%%%%%%%%%%%%%%%%%%%%%%%%%%%%%%%
\section{MPC traffic control via perimeter and signal control coordination}
%%%%%%%%%%%%%%%%%%%%%%%%%%%%%%%%%%%%%%%%%%%%%%%%%%%%%%%%%%%%%%%%%%%%%%%%%%%%%%%%%%%%%%%%%%
\subsection{Introduction to MPC-based traffic control}
%%%%%%%%%%%%%%%%%%%%%%%%%%%%%%%%%%%%%%%%%%%%%%%%%%%%%%%%%%%%%%%%%%%%%%%%%%%%%%%%%%%%%%%%%%
In traffic control perspective, it is usually considered that: the numbers of vehicles contained in road links and the queue lengths at the boundary junctions, i.e, $n_z(\cdot) \forall z \in \mathcal{R}$ and $q_z(\cdot) \forall z: \sigma(z) \in \mathcal{J}^B$, are traffic states; the downstream traffic flows, the splits of traffic signal phases in internal junctions, and the incoming traffic flows of source road links, i.e., $f_{z}^d(\cdot) \forall z \in \mathcal{R}$, $g_p(\cdot) \forall p \in \mathcal{P}_{J_v}, J_v \in \mathcal{J}^I$, and $f_z^u(\cdot) \forall z: \sigma(z) \in \mathcal{J}^B$, are control decisions; the disturbance flows, the turning rates and the demands, i.e., $e_{z}^{in}(\cdot), e_{z}^{out}(\cdot), r_{wz}(\cdot), \forall w \in \mathcal{N}_z^+,  z \in \mathcal{R}$ and $d_z(\cdot) \forall z: \sigma(z) \in \mathcal{J}^B$, can be predicted or estimated and are referred as traffic model parameters.
%Traffic states are managed by regulating control decisions while traffic model parameters can be only measured or estimated.
MPC-based traffic control methods aim to find the optimal control plan for next $K$ control time steps by employing the current traffic states and predicted traffic model parameters in the control formulation. The objectives of MPC-based traffic control systems typically are: (i) maximizing the capacity of the existing transportation infrastructure; and (ii) guaranteeing the smooth operation of the UTN.

The effectiveness of control decisions is usually evaluated by the following performance indexes:
\begin{align}
\Phi^{(1)}(t) &= \sum_{k = 0}^{K - 1} \sum_{z: \sigma(z) \in \mathcal{J}^B} q_z(t+k+1|t), \label{eq_cost1}\\
\Phi^{(2)}(t) &= \sum_{k = 0}^{K - 1} \sum_{z \in \mathcal{R}} \left(n_z(t + k|t) - f_z^d(t + k|t)\right). \label{eq_cost2}\\
\Phi^{(3)}(t) &= \sum_{k = 0}^{K - 1} \sum_{z \in \mathcal{R}} \frac{(n_z(t + k + 1|t))^2}{\overline{n_z}}. \label{eq_cost3}
\end{align}
The first one measures the total queue lengths at boundary junctions. Minimizing $\Phi^{(1)}(t)$ coincides with the goal of reducing the number of vehicles waiting to enter into the UTN. This objective is motivated by policy and operational concerns. In practice, long queues at the perimeter are often unacceptable in practice due to fairness issues, environmental impacts, and local regulations. The quantity $n_z(t + k|t) - f_z^d(t + k|t)$ is the number of vehicles stuck in the road link $z$ in the time interval $[(t+k)T, (t+k+1)T]$; thus, $\Phi^{(2)}(t)$ is minimized to reduce the time delayed for vehicles in the UTN. The third performance index $\Phi^{(3)}(t)$ aims to balance the relative occupancy of road links, which plays an importance role in reducing the congestion risk for the UTN in saturated and oversaturated situations.
To enhance the reliability of the computed control decisions, the physical and safety constraints for all road links and junctions described in the previous section need to be guaranteed. 

Let $t$ be the current control time step.
The MPC-based traffic control problem at time $t$ is formulated as the following optimization problem.
\begin{subequations}\label{eq_problemTP}
\begin{align}
\min\textrm{ }& \Phi(t) = \sum\limits_{k=0}^{K-1} \hat{\Phi}(t+k|t)\\
\textrm{s.t. }& \eqref{eq_traffic_state_n}, \eqref{eq_predictedTrafficFlow}, \eqref{eq_upstream_limit}, \eqref{eq_signal_roadlink}, \eqref{eq_outstream_roadlink}, \forall z \in \mathcal{R}, k \in [0, K-1],\\
&\eqref{eq_queue_demand}, \eqref{eq_queue_limit}, \forall z: \sigma(z) \in \mathcal{J}^B, k \in [0, K-1],\\
&\eqref{eq_signal_limit1}, \eqref{eq_signal_limit2} \forall J_v \in \mathcal{J}^I, k \in [0, K-1].
\end{align}
\end{subequations}
where the cost function $\hat{\Phi}(t+k|t)$ is one or the combination of performance indexes for evaluation of traffic control decisions at the control time step $t+k$.
After solving the constrained optimization problem \eqref{eq_problemTP}, only the determined control decisions corresponding to the current control time step are implemented for controlling the UTN. They include traffic flows entering the UTN from boundary junctions and green time splits, i.e., $f_z^u(t|t) \forall z: \sigma(z) \in \mathcal{J}^B$, and $g_z(t|t) = \frac{1}{S_z}f_z^d(t|t) \forall z:\tau(z) \in \mathcal{J}^I$.
This procedure is repeated in every control time step, $t+1, t+2, \cdots$, not only to improve traffic conditions of the UTN but also to enhance the robustness of the control decisions with uncertainty in measurement and estimation of traffic model parameters. 

The number $K$ in the problem \eqref{eq_problemTP} is called the prediction horizon, which plays an important role in MPC formulation. A short horizontal length is more reliable than the long ones. It reduces computational complexity to make a real-time application feasible and provides a more robust performance to uncertainty, disturbances, and model mismatch. However, if $K$ is too small, it may cause negative aftereffects as a myopic control plan. So, the horizontal length needs to be selected suitably. In MPC-based traffic signal control \cite{BaoLinYe2019, RRNegenborn2008, KonstantinosAmpountolas2009, KonstantinosAmpountolas2010, TamasTettamanti2014, SteliosTimotheou2015, BaoLinYe2016, ZhaoXhou2017, PietroGrandinetti2018, ShuLin2011, NaWu2019, NaWu2020}, $K$ is usually chosen from $3$ to $20$ depending on the time interval $T$. In this paper, we consider $K = 3 \sim 6$ with $T = 60$ seconds.
%%%%%%%%%%%%%%%%%%%%%%%%%%%%%%%%%%%%%%%%%%%%%%%%%%%%%%%%%%%%%%%%%%%%%%%%%%%%%%%%%%%%%%%%%%
\subsection{Lexicographic multi-objective MPC problem}
%%%%%%%%%%%%%%%%%%%%%%%%%%%%%%%%%%%%%%%%%%%%%%%%%%%%%%%%%%%%%%%%%%%%%%%%%%%%%%%%%%%%%%%%%%
The target of the traffic control strategy discussed in this paper involves two main objectives: 1) \textit{Perimeter Control:} maximizing the capacity of the UTN while guaranteeing the smooth operation; and 2) \textit{Traffic Signal Control:} determining traffic signal timing plans for internal junctions of the UTN in order to reduce the congestion risk as well as the time delay of vehicles traversing it.
These two objectives can be represented by the function $\Phi^{(1)}(t)$ (given in \eqref{eq_cost1}) and the combination of $\Phi^{(2)}(t)$ and $\Phi^{(3)}(t)$ (given in \eqref{eq_cost2} and \eqref{eq_cost3}), respectively. Typically, these cost functions are somewhat conflicting. When the network receives too many vehicles, due to the lack of available space, movement will be hindered and the risk of congestion will increase. To reconcile these two objectives, a common approach is to combine them into a single objective function by assigning corresponding weights that reflect their relative importance. However, the process of selecting these weights is complex and leads to a trade-off.
An alternative approach employs the lexicographic ordering procedure, wherein control objectives are ranked to establish a hierarchy. The highest-priority objective is placed at the top of this hierarchy, while the lowest-priority objective is placed at the bottom. The lexicographic optimization approach eliminates the need for tuning and the relative selection of weights, thereby simplifying the optimization process. 

In this study, we prioritize maximizing the capacity of the UTN as our primary objective. By allowing the largest possible inflows into the traffic network without triggering internal congestion—thereby ensuring its smooth operation—this objective effectively contributes to enhancing the overall throughput, i.e., the total number of vehicles served by the UTN.
Beside the physical and safety constraints (\ref{eq_problemTP}b-\ref{eq_problemTP}d), we further add the following constraint for guaranteeing the smooth operation of the UTN.
\begin{equation}\label{eq_smooth_UTN}
n_z(t+k|t) - f_z(t+k|t) \le \gamma_z\overline{n_z}, \forall z \in \mathcal{R}, k \ge 0,
\end{equation}
where $\gamma_z$ is a given parameter for each road link $z \in \mathcal{R}$. The constraint \eqref{eq_smooth_UTN} is designed to ensure that the number of vehicles trapped in each road link during one control time interval remains below a specified quantity. This not only facilitates traffic mobility within the UTN but also indirectly ensures that road links have sufficient space to accommodate vehicles from their upstream neighbors. The rule for choosing the parameter $\gamma_z$ is not considered in this paper and is left as a future work.
Let $\Phi^{(PC)}$ and $\Phi^{(TSC)}$ be the cost functions to be minimized for two considered objectives, i.e., perimeter control and traffic signal control, respectively.
Our lexicographic MPC-based traffic control strategy begins by minimizing the cost function $\Phi^{(PC)}(t) = \Phi^{(1)}(t)$, subject to the constraints in (\ref{eq_problemTP}b-\ref{eq_problemTP}d) and \eqref{eq_smooth_UTN} first. Given that $\Phi^{(1)}(t)$ is a linear cost function, multiple optimal solutions may exist. Subsequently, a second-priority cost function $\Phi^{(TSC)}(t)$ is minimized, with an additional lexicographic constraint introduced. This constraint ensures that the primary objective function remains at its optimal value.

\begin{figure}[htb]
\centering
\scalebox{0.45}{\begin{tikzpicture}[
roundnode0/.style={circle, draw=white, fill=white,font=\fontsize{20}{20}\selectfont},
roundnode1/.style={circle, draw=green!60, fill=green!60, very thick, minimum size=1mm},
roundnode2/.style={circle, draw=red!60, fill=red!60, very thick, minimum size=1mm},
]
%Nodes
\node at (0,-0.5) [roundnode0] (ax_y1) {};
\node at (0,7) [roundnode0] (ax_y2) {};
\node at (0.35,7.3) [roundnode0] (ax_ylabel) {$\Phi^{(TSC)}$};
\node at (-0.5,0) [roundnode0] (ax_x1) {};
\node at (6.5,0) [roundnode0] (ax_x2) {};
\node at (7.25,0.15) [roundnode0] (ax_ylabel) {$\Phi^{(PC)}$};
\node at (0.85,5.5) [roundnode0] (objt1label_other) {$\mathcal{B}$};
\node at (1.5,5.5) [roundnode1] (objt1_other) {};
\node at (0.85,3.5) [roundnode0] (objt1label) {$\mathcal{A}$};
\node at (1.5,3.5) [roundnode1] (objt1) {};
\node at (1.5,0) [roundnode0] (objt1_ax) {};
\node at (1.5,-0.75) [roundnode0] (objt1_label) {$\Phi_{min}^{(PC)}$};
\node at (5.55,1.5) [roundnode0] (objt2label) {$\mathcal{C}$};
\node at (5,1.5) [roundnode1] (objt2) {};
\node at (0,1.5) [roundnode0] (objt2_ax) {};
\node at (-1,1.75) [roundnode0] (objt2_label) {$\Phi_{min}^{(TSC)}$};
%Lines
\draw[->] (ax_y1.north) -- (ax_y2.south);
\draw[->] (ax_x1.east) -- (ax_x2.west);
\draw[-,dotted] (objt1_ax.north) -- (objt1.south);
\draw[-,dotted] (objt2_ax.east) -- (objt2.west);
\draw[-,red] (objt1_other.south) -- (objt1.north);
\draw[-,red] (objt1.south) .. controls +(down:8mm) and +(left:18mm) .. (objt2.west);
\end{tikzpicture}}
\caption{Illustration of multiobjective optimization.} \label{fig_lexicographic}
\end{figure}
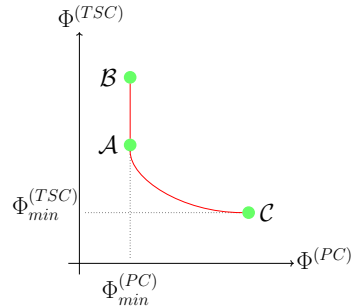

We use Fig. \ref{fig_lexicographic} to illustrate the difference between the lexicographic optimization approach and the weighted combination methods in multiobjective optimization. Denote by $\Phi_{min}^{(PC)}$ and $\Phi^{(TSC)}$ the minimum values of these two cost functions. Fig. \ref{fig_lexicographic} depicts the plane constructed by two cost functions $\Phi^{(PC)}$ and $\Phi^{(TSC)}$. For the lexicographic ordering implementation, the cost function $\Phi^{(PC)}$ is minimized first to determine the optimal cost value $\Phi_{min}^{(PC)}$. The solution found by this process is corresponding to one point in the line $\mathcal{A}\mathcal{B}$. The optimal solution of the lexicographic ordering approach is determined by minimizing the cost function $\Phi^{(TSC)}$ while guaranteeing $\Phi^{(PC)} = \Phi_{min}^{(PC)}$. Then it will find the point $\mathcal{A}$.
In the weighted combination approach, the optimal solution is to minimize a cost function in the form of $\Phi^{combined} = \vartheta\Phi^{(PC)} + \xi\Phi^{(TSC)}$ where $\vartheta$ and $\xi$ are positive weights. Depending on the chosen weights, the minimum value of the cost $\Phi^{combined}$, denoted by $\Phi_{min}^{combined}$, will be corresponding to one point in the curve $\mathcal{A}\mathcal{C}$. Though $\Phi_{min}^{combined}$ approaches to $\mathcal{A}$ as the value of $\frac{\vartheta}{\xi}$ goes to infinite, the computation load for the weighted combination approach with large value $\frac{\vartheta}{\xi}$ is significantly big compared to the lexicographic ordering approach (as shown in Subsection V-C).

The cost function $\Phi^{(TSC)}(t)$ reflects the improvement in traffic conditions of control decisions for the given total exogenous inflows. Usually, $\Phi^{(TSC)}(t)$ is chosen in the form of $\alpha\Phi^{(2)}(t) + \Phi^{(3)}(t)$ where $\alpha$ is a given weight. According to our experiment, the weight $\alpha$ should be selected in the range $[0.15, 0.3]$ to accelerate the traveling of vehicles in the unsaturated situations while motivating the balance of the road links occupancy in the saturated situations \cite{VietHoangPham2022, VietHoangPham2023}. Moreover, we add the term $\sum_{z: \sigma(z) \in \mathcal{J}^B}\beta(q_z(t+k+1|t))^2$, where $\beta$ is a small weight, for the equal distribution of queue lengths at boundary junctions. So, we have
\begin{align}
\Phi^{(TSC)}(t) = \sum_{k = 0}^{K - 1}\Biggl\{ \sum_{z \in \mathcal{R}} \Biggl(\frac{(n_z(t + k + 1|t))^2}{\overline{n_z}} + \alpha\biggl(n_z(t + k|t) - f_z^d(t + k|t)\biggr)\Biggr) + \sum_{z: \sigma(z) \in \mathcal{J}^B}\beta(q_z(t+k+1|t))^2\Biggr\}\label{eq_costTSC}
\end{align}

For detailed formulation, we define the optimization problem $\textbf{(PC)}$ corresponding to the highest-priority control objective as follows.
\begin{subequations}
\begin{align*}
\textbf{(PC)}: \textrm{ }\min &\sum\limits_{k=0}^{K-1} \sum_{z: \sigma(z) \in \mathcal{J}^B} q_z(t+k+1|t)\\
\textrm{ s.t. }& (\ref{eq_problemTP}b), (\ref{eq_problemTP}c), (\ref{eq_problemTP}d), \eqref{eq_smooth_UTN}.
\end{align*}
\end{subequations}
The problem $\textbf{(PC)}$ is referred to the MPC-based traffic perimeter control. 
Let $\Phi_{opt}^{(PC)}$ be the optimal cost of the problem $\textbf{(PC)}$. The optimal decisions corresponding to the lexicographic optimization approach are obtained by solving the following optimization problem $\textbf{(TSC)}$, which is referred to the MPC-based trafffic signal control problem.
\begin{subequations}
\begin{align*}
\textbf{(TSC)}:\textrm{ }\min &\Phi^{(TSC)}(t)\\
\textrm{ s.t. }& (\ref{eq_problemTP}b), (\ref{eq_problemTP}c), (\ref{eq_problemTP}d), \eqref{eq_smooth_UTN},\\
& \sum\limits_{k=0}^{K-1} \sum\limits_{z: \sigma(z) \in \mathcal{J}^B} q_z(t+k+1|t) = \Phi_{opt}^{(PC)}.
\end{align*}
\end{subequations}

Since an UTN typically consists of numerous roads and junction, the MPC-based traffic control problems, i.e., $\textbf{(PC)}$ and $\textbf{(TSC)}$, can become large and time-consuming to solve. This poses significant challenges in achieving a real-time application. To address this issue, a distributed solution method is necessary. By employing multiple computational units that collaborate in solving the MPC problems, the required computation time can be significantly reduced compared to a centralized control system. In the next subsection, we first propose a method to decompose the UTN and formulate the distributed traffic control problem.
%%%%%%%%%%%%%%%%%%%%%%%%%%%%%%%%%%%%%%%%%%%%%%%%%%%%%%%%%%%%%%%%%%%%%%%%%%%%%%%%%%%%%%%%%%
\subsection{Distributed control problem}
%%%%%%%%%%%%%%%%%%%%%%%%%%%%%%%%%%%%%%%%%%%%%%%%%%%%%%%%%%%%%%%%%%%%%%%%%%%%%%%%%%%%%%%%%%
Assume that the considered UTN is decomposed into $N$ interconnected subnetworks and each of them is controlled by one local controller. From a multiagent system perspective, these local controllers are called agents. They can work in parallel to share the total computational load and reduce the execution time. We use $\mathcal{S}_i$, $i = 1,\dots, N$, to denote one subnetwork and also its corresponding local controller.

Denote by $\mathcal{J}_i^{B}$ and $\mathcal{J}_i^{I}$ the sets of boundary and internal junctions of the subnetwork $\mathcal{S}_i$, respectively. The set $\mathcal{J}_i^{B}$ could be empty for some subnetworks. Then the set of all junctions in the subnetwork $\mathcal{S}_i$ is $\mathcal{J}_{i} = \mathcal{J}_i^{B} \cup \mathcal{J}_i^{I}$. The road links connecting junctions in the set $\mathcal{J}_{i}$ are called internal road links of the subnetwork $\mathcal{S}_i$. We define the set $\mathcal{R}_i$ as
\[\mathcal{R}_i = \{z \in \mathcal{R}: \sigma(z), \tau(z) \in \mathcal{J}_i\}.\]
Also, we define $\mathcal{R}_{ij}$ as the set of road links connecting from a junction in $\mathcal{S}_i$ to another in $\mathcal{S}_j$.
%For simplicity, we assume no road links connect a boundary junction in $\mathcal{T}_i$ to a junction in $\mathcal{T}_j$. 
So, we have
\[\mathcal{R}_{ij} = \{z \in \mathcal{R}: \sigma(z) \in \mathcal{J}_i, \tau(z) \in \mathcal{J}_j\}.\]
If $\mathcal{R}_{ij} \neq \emptyset$, the agent $\mathcal{S}_i$ is called one neighbor of the agent $\mathcal{S}_j$ and vice versa. For each agent $\mathcal{S}_i$, we define the set of its neighbors as $\mathcal{N}_{\mathcal{S}_i} = \{\mathcal{S}_j: \mathcal{R}_{ij} \cup \mathcal{R}_{ji} \neq \emptyset\}$.
For better understanding of the aforementioned notations, we refer to an illustrative example in Appendix-C.

As each agent $\mathcal{S}_i$ is responsible in controlling the subnetwork $\mathcal{S}_i$, this agent needs to find the optimal control decisions corresponding the junction in the set $\mathcal{J}_i$.
They include: the splits of traffic phases in the internal junctions (i.e., $g_p(t+k|t), \forall p \in \mathcal{P}_{J_v}, J_v \in \mathcal{J}_i^I$ and $0 \le k \le K-1$), the downstream traffic flows of road links whose destination junctions are in $\mathcal{J}_i$ (i.e., $f_z^d(t+k|t), \forall z: \tau(z) \in \mathcal{J}_i$ and $0 \le k \le K-1$), and the traffic inflows from the boundary junctions (i.e., $f_z^u(t+k|t), \forall z: \sigma(z) \in \mathcal{J}_i^B$ and $0 \le k \le K-1$) if $\mathcal{J}_i^B \neq \emptyset$.
Consider the conservation law equation \eqref{eq_traffic_state_n} for the road link $z \in \mathcal{R}_{ij}$, we can observe that the traffic state $n_z(t+k+1|t)$ depends on the control decisions of both agents $\mathcal{S}_i$ and $\mathcal{S}_j$. To facilitate the cooperation of agents in solving MPC problems, it is necessary that the agent $\mathcal{S}_i$ can communicate to the agent $\mathcal{S}_j$ if $\mathcal{R}_{ij} \neq \emptyset$. So we make the following assumption
\begin{Assumption}\label{as_communication}
Each agent $\mathcal{S}_i$ can exchange information with all its neighbors in the set $\mathcal{N}_{\mathcal{S}_i}$. 
\end{Assumption}

Finally, we state the main control problem of this paper as follows.
\begin{Problem}\label{prob_main}
Design a distributed method for each agent $\mathcal{S}_i$ to find its corresponding control decisions in the optimal solution of the MPC problem $\textbf{(TSC)}$ while using only information belonging to itself or received from its neighbors in $\mathcal{N}_{\mathcal{S}_i}$.
\end{Problem}
%%%%%%%%%%%%%%%%%%%%%%%%%%%%%%%%%%%%%%%%%%%%%%%%%%%%%%%%%%%%%%%%%%%%%%%%%%%%%%%%%%%%%%%%%%
%%%%%%%%%%%%%%%%%%%%%%%%%%%%%%%%%%%%%%%%%%%%%%%%%%%%%%%%%%%%%%%%%%%%%%%%%%%%%%%%%%%%%%%%%%
\section{Proposed strategy}
%%%%%%%%%%%%%%%%%%%%%%%%%%%%%%%%%%%%%%%%%%%%%%%%%%%%%%%%%%%%%%%%%%%%%%%%%%%%%%%%%%%%%%%%%%
In this section, we adopt a well-established approach to develop a distributed method for addressing Problem \ref{prob_main}. Initially, we formulate distributed optimization problems that are equivalent to $\textbf{(PC)}$ and $\textbf{(TSC)}$. Then, the optimal control decisions are obtained by using Algorithm \ref{alg_proposed_opt} (given in Appendix-A) as the distributed method for agents to cooperatively solve these distributed optimization problems.
%%%%%%%%%%%%%%%%%%%%%%%%%%%%%%%%%%%%%%%%%%%%%%%%%%%%%%%%%%%%%%%%%%%%%%%%%%%%%%%%%%%%%%%%%%
\subsection{Distributed solution method for solving $\textbf{(PC)}$}
%%%%%%%%%%%%%%%%%%%%%%%%%%%%%%%%%%%%%%%%%%%%%%%%%%%%%%%%%%%%%%%%%%%%%%%%%%%%%%%%%%%%%%%%%%
To formulate a distributed version of the problem $\textbf{(PC)}$, it is essential to define the local control variables for each agent. Since the agent $\mathcal{S}_i$ is responsible for controlling the internal junctions and boundary junctions within its subnetwork, the traffic states of the internal road links in the set $\mathcal{R}_i$ (i.e., $n_z(t+k+1|t) \forall z \in \mathcal{R}_i$ and $q_z(t+k+1|t) \forall z: \sigma(z) \in \mathcal{J}_i^B$), and the control decisions corresponding to the junctions in the set $\mathcal{J}_i$ (i.e., $g_p(t+k|t) \forall p \in \mathcal{P}_{J_v}, J_v \in \mathcal{J}_i^I$, $f_z^d(t+k|t) \forall z: \tau(z) \in \mathcal{J}_i^I$ and $f_z^u(t+k|t) \forall z: \sigma(z) \in \mathcal{J}_i^B$) are naturally considered as the local variables of the agent $\mathcal{S}_i$. For road links connecting subnetwork $\mathcal{S}_i$ to subnetwork $\mathcal{S}_j$, we assume that both agents $\mathcal{S}_i$ and $\mathcal{S}_j$ estimate their traffic states and downstream traffic flows. So, we have the stacked vector of all local variables of the agent $\mathcal{S}_i$ as follows.
\begin{equation}
\hat{\textbf{x}}_i = \textrm{col}\left\{\textrm{col}\left\{\hat{\textbf{n}}_{i,k}, \hat{\textbf{q}}_{i,k}, \hat{\textbf{f}}_{i,k}^d, \hat{\textbf{f}}_{i,k}^u, \hat{\textbf{g}}_{i,k}\right\}: k \in [0,K-1]\right\}.
\end{equation}
where $\hat{\textbf{n}}_{i,k} = \textrm{col}\Bigl\{n_z(t+k+1|t): \sigma(z) \in \mathcal{J}_i \textrm{ or } \tau(z) \in \mathcal{J}_i\Bigr\}$, $\hat{\textbf{q}}_{i,k} = \textrm{col}\Bigl\{q_z(t+k+1|t): \sigma(z) \in \mathcal{J}_i^B\Bigr\}$, $\hat{\textbf{f}}_{i,k}^d = \textrm{col}\Bigl\{f_z^d(t+k|t): \tau(z) \in \mathcal{J}_i \textrm{ or } \sigma(z) \in \mathcal{J}_i\Bigr\}$, $\hat{\textbf{f}}_{i,k}^u = \textrm{col}\Bigl\{f_z^u(t+k|t): \sigma(z) \in \mathcal{J}_i^B\Bigr\}$ and $\hat{\textbf{g}}_{i,k} = \textrm{col}\Bigl\{g_p(t+k|t): p \in \bigcup_{J_v \in \mathcal{J}_i^I} \mathcal{P}_{J_v}\Bigr\}$.

With the above definition of local variable vector, it can be found a matrix $\textbf{A}_i$ and a vector $\textbf{a}_i$ for each agent $\mathcal{S}_i$ such that the equality constraints \eqref{eq_traffic_state_n} and \eqref{eq_queue_demand} for all road links having source junctions in $\mathcal{J}_i$ over the time horizon $[t+1, t+K]$ can be presented in the following compact form
\begin{equation}\label{eq_local_eqconstr}
\textbf{A}_i\hat{\textbf{x}}_i = \textbf{a}_i, \forall i = 1, \dots, N.
\end{equation}
Similarly, the inequality constraints (\ref{eq_signal_limit1}-\ref{eq_signal_limit2}), $\forall J_v \in \mathcal{J}_i^I$, and (\ref{eq_predictedTrafficFlow}-\ref{eq_outstream_roadlink}, \ref{eq_smooth_UTN}), $\forall z: \tau(z) \in \mathcal{J}_i$, can be staked into the form of
\begin{equation}\label{eq_local_ineqconstr}
\textbf{B}_i\hat{\textbf{x}}_i \le \textbf{b}_i, \forall i = 1, \dots, N.
\end{equation}
with suitable matrix $\textbf{B}_i$ and vector $\textbf{b}_i$. So, the set of all constraints in the problem $\textbf{(PC)}$, i.e., (\ref{eq_problemTP}b-\ref{eq_problemTP}d) and \eqref{eq_smooth_UTN}, is equivalent to the set of constraints \eqref{eq_local_eqconstr}, \eqref{eq_local_ineqconstr}. In addition, we can find a vector $\textbf{c}_i$ for each agent $\mathcal{S}_i$ such that
\begin{equation}
\textbf{c}_i^T\hat{\textbf{x}}_i = \sum\limits_{k = 0}^{K-1}\sum\limits_{z: \sigma(z) \in \mathcal{J}_i^B} q_z(t+k+1).
\end{equation}
For the subnetwork $\mathcal{S}_i$ where $\mathcal{J}_i^B = \emptyset$, the vector $\textbf{c}_i$ is a zero vector.
Consider the road link $z \in \mathcal{R}_{ij}$, its traffic state $n_z(t+k+1|t)$ and its downstream traffic flow $f_z^d(t+k|t)$ are local variables of both agents $\mathcal{S}_i$ and $\mathcal{S}_j$ for all $k \ge 0$. It is natural to require that the elements corresponding to $n_z(t+k+1|t)$ and $f_z^d(t+k|t)$ in $\hat{\textbf{x}}_i$ are equal to the ones in $\hat{\textbf{x}}_j$. These relations are called coupling constraints between two agents $\mathcal{S}_i$ and $\mathcal{S}_j$. Let $\textbf{A}_{ij}$ be the matrix such that all coupling constraints between two agents $\mathcal{S}_i$ and $\mathcal{S}_j$ can be represented by the equation \eqref{eq_coupl_eqconstr}.
\begin{equation}\label{eq_coupl_eqconstr}
\textbf{A}_{ij}\hat{\textbf{x}}_i + \textbf{A}_{ji}\hat{\textbf{x}}_j = \textbf{0}, \forall \mathcal{S}_j \in \mathcal{N}_{\mathcal{S}_i}, i = 1, \dots, N.
\end{equation}

In summary, we have the following optimization problem which is a distributed version for the problem $\textbf{(PC)}$.
\begin{equation}\label{eq_PC_compacted}
\min\limits_{\hat{\textbf{x}}_i, \forall i = 1, \dots, N} \sum_{i = 1}^{N}\textbf{c}_i^T\hat{\textbf{x}}_i \textrm{ s.t. } \eqref{eq_local_eqconstr}, \eqref{eq_local_ineqconstr}, \eqref{eq_coupl_eqconstr}.
\end{equation}
In this optimization problem, the cost function is the summation of separated parts, each of them belongs to one agent. The equations \eqref{eq_local_eqconstr} and \eqref{eq_local_ineqconstr} are called local constraints which depend only on local information of individual agents. Meanwhile, the equation \eqref{eq_coupl_eqconstr} is called coupled constraint since it presents the relation between two neighboring agents.

As the optimization problem \eqref{eq_PC_compacted} has the similar form as in the equation \eqref{eq_distributedOpt}, its optimal solution can be found by applying Algorithm \ref{alg_proposed_opt}. Let $\hat{\textbf{x}}_i^{(PC)}$ be the output of the agent $\mathcal{S}_i$ when using Algorithm \ref{alg_proposed_opt} to solve the optimization problem \eqref{eq_PC_compacted}. We have
\begin{equation}\label{eq_solve_PC}
\hat{\textbf{x}}_i^{(PC)} = \textrm{DistSol}(\textbf{0}, \textbf{c}_i,\textbf{A}_i, \textbf{a}_i, \textbf{B}_i, \textbf{b}_i, \{\textbf{A}_{ij}: \mathcal{S}_j \in \mathcal{N}_{\mathcal{S}_i}\}, \mathcal{N}_{\mathcal{S}_i})
\end{equation}
for all $i = 1, 2, \dots, N$. The optimal cost value for the problem $\textbf{(PC)}$ can be computed as $\Phi_{opt}^{(PC)} = \sum_{i = 1}^{N}\textbf{c}_i^T\hat{\textbf{x}}_i^{(PC)}$.
%%%%%%%%%%%%%%%%%%%%%%%%%%%%%%%%%%%%%%%%%%%%%%%%%%%%%%%%%%%%%%%%%%%%%%%%%%%%%%%%%%%%%%%%%%
\subsection{Distributed solution method for solving $\textbf{(TSC)}$}
%%%%%%%%%%%%%%%%%%%%%%%%%%%%%%%%%%%%%%%%%%%%%%%%%%%%%%%%%%%%%%%%%%%%%%%%%%%%%%%%%%%%%%%%%%
For each agent $\mathcal{S}_i$, it can be determined a matrix $\textbf{H}_i$ and a vector $\textbf{h}_i$ such that \[\frac{1}{2}\hat{\textbf{x}}_i^T\textbf{H}_i\hat{\textbf{x}}_i + \textbf{h}_i^T\hat{\textbf{x}}_i = \sum_{k = 0}^{K - 1}\Biggl\{ \sum_{z: \sigma(z) \in \mathcal{J}_i} \Biggl(\frac{(n_z(t + k + 1|t))^2}{\overline{n_z}} + \alpha\biggl(n_z(t + k|t) - f_z^d(t + k|t)\biggr)\Biggr) + \sum_{z: \sigma(z) \in \mathcal{J}_i^B}\beta(q_z(t+k+1|t))^2\Biggr\}.\]
Here the set $\{z: \sigma(z) \in \mathcal{J}_i\}$ consists of road links whose source junctions are in the set $\mathcal{J}_i$. So, the cost function $\Phi^{(TSC)}(t)$ has a separated form as $\Phi^{(TSC)}(t) = \sum_{i = 1}^{N}\left(\frac{1}{2}\hat{\textbf{x}}_i^T\textbf{H}_i\hat{\textbf{x}}_i + \textbf{h}_i^T\hat{\textbf{x}}_i\right)$.
Then the optimization problem $\textbf{(TSC)}$ can be rewritten in the compact form as follows.
\begin{subequations}\label{eq_TSC_compact}
\begin{align}
\min_{\hat{\textbf{x}}_i, \forall i = 1, \dots, N}& \sum_{i = 1}^{N}\left(\frac{1}{2}\hat{\textbf{x}}_i^T\textbf{H}_i\hat{\textbf{x}}_i + \textbf{h}_i^T\hat{\textbf{x}}_i\right)\\
\textrm{s.t. }& \eqref{eq_local_eqconstr}, \eqref{eq_local_ineqconstr}, \eqref{eq_coupl_eqconstr},\\
& \sum_{i = 1}^{N}\textbf{c}_i^T\hat{\textbf{x}}_i = \sum_{i = 1}^{N}\textbf{c}_i^T\textbf{x}_i^{(PC)}.
\end{align}
\end{subequations}
The equation (\ref{eq_TSC_compact}c) consists of information from many agents which may not communicate directly. In order to use Algorithm \ref{alg_proposed_opt} to find the optimal solution of the problem $\textbf{(TSC)}$, we need to transform the feasible set of \eqref{eq_TSC_compact} into the combination of local constraints and coupling constraints between neighboring agents.

Let each agent $\mathcal{S}_i$ estimate virtual variables $\tilde{v}_{ij}, \forall \mathcal{S}_j \in \mathcal{N}_{\mathcal{S}_i}$, such that
\begin{subequations}\label{eq_temp}
\begin{align}
\textbf{c}_i^T\hat{\textbf{x}}_i + \sum_{\mathcal{S}_j \in \mathcal{N}_{\mathcal{S}_i}}\tilde{v}_{ij} = \textbf{c}_i^T\textbf{x}_i^{(PC)}, \forall i = 1, 2, \dots, N,\\
\tilde{v}_{ij} + \tilde{v}_{ji} = 0, \forall \mathcal{S}_j \in \mathcal{N}_{\mathcal{S}_i}, \forall i = 1, 2, \dots, N.
\end{align}
\end{subequations}
Define the variable vector $\tilde{\textbf{x}}_i$ as
\begin{equation}
\tilde{\textbf{x}}_i = \left[\hat{\textbf{x}}, \textrm{col}\{\tilde{v}_{ij}: \mathcal{S}_j \in \mathcal{N}_{\mathcal{S}_i}\}\right], \forall i = 1, 2, \dots, N.
\end{equation}
Let the matrix $\tilde{\textbf{A}}_{ij}, \forall \mathcal{S}_i \in \mathcal{N}_{\mathcal{S}_i}, \forall i = 1, \dots, N$, be defined by
\[\tilde{\textbf{A}}_{ij}\tilde{\textbf{x}}_i = \textrm{col}\{\textbf{A}_{ij}\hat{\textbf{x}}_i, \tilde{v}_{ij}\}.\]
We also define the matrices $\tilde{\textbf{H}}_i = \textrm{blkdiag}\{\textbf{H}_i, \textbf{O}\}$, $\tilde{\textbf{A}}_i = \left[\begin{matrix} \textbf{A}_i & \textbf{O}\\ \textbf{c}_i^T & \textbf{1}^T\end{matrix}\right]$, $\tilde{\textbf{B}}_i = \left[\textbf{B}_i, \textbf{O}\right]$, and the vector $\tilde{\textbf{h}}_i = \textrm{col}\{\textbf{h}_i, \textbf{0}\}$ such that $\tilde{\textbf{x}}_i^T\tilde{\textbf{H}}_i\tilde{\textbf{x}}_i = \hat{\textbf{x}}_i^T\textbf{H}_i\hat{\textbf{x}}_i$, $\tilde{\textbf{A}}_i\tilde{\textbf{x}}_i = \textrm{col}\{\textbf{A}_i\hat{\textbf{x}}_i, \textbf{c}_i^T\hat{\textbf{x}}_i + \sum_{\mathcal{S}_j \in \mathcal{N}_{\mathcal{S}_i}}\tilde{v}_{ij}\}$, $\tilde{\textbf{B}}_i\tilde{\textbf{x}}_i = \textbf{B}_i\hat{\textbf{x}}_i$, and $\tilde{\textbf{h}}_i^T\tilde{\textbf{x}}_i = \textbf{h}_i^T\hat{\textbf{x}}_i$.
Replacing the constraint (\ref{eq_TSC_compact}c) by the equations in \eqref{eq_temp}, we have the following optimization problem:
\begin{subequations}\label{eq_TSC_tosolve}
\begin{align}
\min_{\tilde{\textbf{x}}_i, \forall i}& \sum_{i = 1}^{N}\left(\frac{1}{2}\tilde{\textbf{x}}_i^T\tilde{\textbf{H}}_i\tilde{\textbf{x}}_i + \tilde{\textbf{h}}_i^T\tilde{\textbf{x}}_i\right)\\
\textrm{s.t. }& \tilde{\textbf{A}}_i\tilde{\textbf{x}}_i = \tilde{\textbf{a}}_i, \forall i = 1, \dots, N,\\
& \tilde{\textbf{B}}_i\tilde{\textbf{x}}_i \le \tilde{\textbf{b}}_i, \forall i = 1, \dots, N,\\
& \tilde{\textbf{A}}_{ij}\tilde{\textbf{x}}_i + \tilde{\textbf{A}}_{ji}\tilde{\textbf{x}}_j = \textbf{0}, \forall \mathcal{S}_j \in \mathcal{N}_{\mathcal{S}_i}, i = 1, \dots, N.
\end{align}
\end{subequations}
where $\tilde{\textbf{a}}_i = \textrm{col}\{\textbf{a}_i, \textbf{c}_i^T\textbf{x}_i^{(PC)}\}$ and $\tilde{\textbf{b}}_i = \textbf{b}_i, \forall i = 1, \dots, N$. Each equation in (\ref{eq_temp}a) is considered as a local constraint for an individual agent and embedded into (\ref{eq_TSC_tosolve}b). Meanwhile, each equation in (\ref{eq_temp}b) corresponding to agent $\mathcal{S}_i$ and agent $\mathcal{S}_j$ is a coupling constraint between these two agents and embedded into (\ref{eq_TSC_tosolve}d).
The following lemma verifies that the optimal solution of the problem $\textbf{(TSC)}$ can be found by solving the optimization problem \eqref{eq_TSC_tosolve}.
\begin{Lemma}\label{lm_TSC}
Let $\tilde{\textbf{x}}^{opt} = \textrm{col}\{\tilde{\textbf{x}}_i^{opt}: 1 \le i \le N\}$ be the optimal solution of the problem \eqref{eq_TSC_tosolve}. Then the vector $\hat{\textbf{x}}^{opt} = \textrm{col}\{\textbf{D}_i\tilde{\textbf{x}}_i^{opt}: 1 \le i \le N\}$, where $\textbf{D}_i = [\textbf{I}, \textbf{O}]$ such that $\textbf{D}_i\tilde{\textbf{x}}_i = \hat{\textbf{x}}_i, \forall i = 1, 2, \dots, N$, is the optimal solution of the problem \eqref{eq_TSC_compact}.
\end{Lemma}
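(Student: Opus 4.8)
The plan is to establish that problems \eqref{eq_TSC_compact} and \eqref{eq_TSC_tosolve} are equivalent in the sense that their feasible sets correspond under the projection $\tilde{\textbf{x}}_i \mapsto \hat{\textbf{x}}_i = \textbf{D}_i\tilde{\textbf{x}}_i$, and that the objective values agree along this correspondence; optimality of the projected point then follows. The first observation, immediate from the construction of $\tilde{\textbf{H}}_i$ and $\tilde{\textbf{h}}_i$, is that the two cost functions coincide: for any $\tilde{\textbf{x}}_i$ one has $\frac{1}{2}\tilde{\textbf{x}}_i^T\tilde{\textbf{H}}_i\tilde{\textbf{x}}_i + \tilde{\textbf{h}}_i^T\tilde{\textbf{x}}_i = \frac{1}{2}\hat{\textbf{x}}_i^T\textbf{H}_i\hat{\textbf{x}}_i + \textbf{h}_i^T\hat{\textbf{x}}_i$, since the virtual-variable blocks of $\tilde{\textbf{H}}_i$ and $\tilde{\textbf{h}}_i$ vanish. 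Hence it suffices to match the feasible sets.

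For the forward direction I would take any $\tilde{\textbf{x}}$ feasible for \eqref{eq_TSC_tosolve} and show $\hat{\textbf{x}} = \textrm{col}\{\textbf{D}_i\tilde{\textbf{x}}_i\}$ is feasible for \eqref{eq_TSC_compact}. Unpacking the block definitions of $\tilde{\textbf{A}}_i$, $\tilde{\textbf{B}}_i$, and $\tilde{\textbf{A}}_{ij}$, constraints (\ref{eq_TSC_tosolve}b)--(\ref{eq_TSC_tosolve}d) split exactly into the local constraints \eqref{eq_local_eqconstr}, \eqref{eq_local_ineqconstr}, the coupling constraints \eqref{eq_coupl_eqconstr}, and the auxiliary relations (\ref{eq_temp}a)--(\ref{eq_temp}b). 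It then remains to recover the global constraint (\ref{eq_TSC_compact}c). Summing (\ref{eq_temp}a) over $i = 1,\dots,N$ yields $\sum_i \textbf{c}_i^T\hat{\textbf{x}}_i + \sum_i\sum_{\mathcal{S}_j \in \mathcal{N}_{\mathcal{S}_i}}\tilde{v}_{ij} = \sum_i \textbf{c}_i^T\textbf{x}_i^{(PC)}$; since the neighbor relation is symmetric, each unordered pair contributes $\tilde{v}_{ij} + \tilde{v}_{ji}$, which vanishes by (\ref{eq_temp}b), so the double sum is zero and (\ref{eq_TSC_compact}c) follows.

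For the converse (lifting) direction I would show that every $\hat{\textbf{x}}$ feasible for \eqref{eq_TSC_compact} admits virtual variables $\tilde{v}_{ij}$ rendering the augmented $\tilde{\textbf{x}}$ feasible for \eqref{eq_TSC_tosolve}. Setting $s_i := \textbf{c}_i^T\textbf{x}_i^{(PC)} - \textbf{c}_i^T\hat{\textbf{x}}_i$, the requirements (\ref{eq_temp}a)--(\ref{eq_temp}b) amount to finding antisymmetric edge values on the communication graph whose net value at each node $i$ equals $s_i$, and constraint (\ref{eq_TSC_compact}c) guarantees $\sum_i s_i = 0$. This is a network-flow feasibility problem, solvable because the communication graph induced by the decomposition is connected (a consequence of the reachability imposed by Assumption \ref{as_graph}): on a connected graph with $\sum_i s_i = 0$, a flow with the prescribed node divergences always exists, e.g.\ by fixing a spanning tree and assigning tree-edge values recursively from the leaves inward, the root balance being automatic. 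I expect this lifting step to be the main obstacle, as it is the only place where the graph structure is genuinely invoked and where one must argue existence rather than merely unpack block definitions.

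Combining the two directions, the projection of the feasible set of \eqref{eq_TSC_tosolve} coincides with the feasible set of \eqref{eq_TSC_compact}, and since the objectives agree along this projection, the two problems share a common optimal value. Because $\hat{\textbf{x}}^{opt} = \textrm{col}\{\textbf{D}_i\tilde{\textbf{x}}_i^{opt}\}$ is feasible for \eqref{eq_TSC_compact} by the forward direction and attains this common optimal value, it is an optimal solution of \eqref{eq_TSC_compact}, which establishes the claim.
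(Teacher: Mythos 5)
Your proof is correct, and it takes a genuinely different route from the paper's. The paper argues by contradiction: it supposes $\hat{\textbf{x}}^{opt}$ is not optimal for \eqref{eq_TSC_compact}, takes a strictly better feasible point $\hat{\textbf{x}}^{TSC}$, lifts it to a feasible point of \eqref{eq_TSC_tosolve} with the same cost by constructing virtual variables from a solution of the node-balance system \eqref{eq_temp_proof}, and contradicts the optimality of $\tilde{\textbf{x}}^{opt}$. You instead prove a full equivalence of the two problems --- the projection $\tilde{\textbf{x}}_i \mapsto \textbf{D}_i\tilde{\textbf{x}}_i$ maps the feasible set of \eqref{eq_TSC_tosolve} onto that of \eqref{eq_TSC_compact} and preserves the objective --- and read off optimality of the projected optimizer. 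The decisive step is the same in both arguments (existence of antisymmetric virtual variables with prescribed node divergences), but your justification of it is the stronger one: you observe that consistency requires $\sum_i s_i = 0$, which is exactly what constraint (\ref{eq_TSC_compact}c) supplies, and that connectivity of the communication graph then yields a solution via a spanning-tree assignment. The paper instead argues only that the system \eqref{eq_temp_proof} has more unknowns than equations ($\sum_i |\mathcal{N}_{\mathcal{S}_i}| > N$), which by itself does not guarantee consistency of an underdetermined linear system, and which in any case does not imply connectivity (a graph on $N$ nodes can have many edges and still be disconnected). Your direct approach also buys something the contradiction argument quietly omits: an explicit verification that $\hat{\textbf{x}}^{opt}$ is feasible for \eqref{eq_TSC_compact} in the first place (your forward direction, where summing (\ref{eq_temp}a) over $i$ and cancelling the antisymmetric pairs recovers (\ref{eq_TSC_compact}c)). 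The only caveat is that connectivity of the agent communication graph is not literally stated as an assumption in the paper; it is the natural standing hypothesis for the whole distributed scheme (and is what the min-consensus termination test \eqref{eq_minconsensus} already presumes), but you should flag it explicitly rather than attribute it to Assumption \ref{as_graph} alone.
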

\begin{proof}
See Appendix-B.
\end{proof}
As the optimization problem \eqref{eq_TSC_tosolve} has the form of \eqref{eq_distributedOpt}, its optimal solution can be found by applying Algorithm \ref{alg_proposed_opt}. Let $\tilde{\textbf{x}}_i^{(TSC)}$ be the output of this process as
\begin{equation}\label{eq_solve_TSC}
\tilde{\textbf{x}}_i^{(TSC)} = \textrm{DistSol}(\tilde{\textbf{H}}_i, \tilde{\textbf{h}}_i,\tilde{\textbf{A}}_i, \tilde{\textbf{a}}_i, \tilde{\textbf{B}}_i, \tilde{\textbf{b}}_i, \{\tilde{\textbf{A}}_{ij}: \mathcal{S}_j \in \mathcal{N}_{\mathcal{S}_i}\}, \mathcal{N}_{\mathcal{S}_i})
\end{equation}
for all $i = 1, 2, \dots, N$.
%%%%%%%%%%%%%%%%%%%%%%%%%%%%%%%%%%%%%%%%%%%%%%%%%%%%%%%%%%%%%%%%%%%%%%%%%%%%%%%%%%%%%%%%%%
\subsection{Strategy for Optimal Traffic Control}
%%%%%%%%%%%%%%%%%%%%%%%%%%%%%%%%%%%%%%%%%%%%%%%%%%%%%%%%%%%%%%%%%%%%%%%%%%%%%%%%%%%%%%%%%%
\begin{figure}[htb]
\centering
\scalebox{0.85}{\begin{tikzpicture}[
textnode/.style={rectangle, fill=white, opacity=0.9, minimum size=1mm, text=blue,text opacity=1,font=\fontsize{20}{20}\selectfont},
processnode/.style={rectangle, draw=black, fill=white,  very thick, minimum size=5mm, text=black,text opacity=1,font=\fontsize{10}{10}\selectfont},
initendnode/.style={ellipse, draw=black, fill=white,  very thick, minimum size=5mm, text=black,text opacity=1,font=\fontsize{10}{10}\selectfont},
]
%Nodes
\node at (4,6) [textnode] (nStart_temp) {};
\node at (0,6) [initendnode] (nStart) {Begin control step};
\node at (0,5) [processnode] (nDeterminePC) {Determine \textcolor{blue}{$\textbf{c}_i, \textbf{A}_i, \textbf{a}_i, \textbf{B}_i, \textbf{b}_i, \textbf{A}_{ij}'s$}};
\node at (0,4) [processnode] (nSolvePC) {$\hat{\textbf{x}}_i^{(PC)} \gets$ \textcolor{blue}{\eqref{eq_solve_PC}}};
\node at (0,3) [processnode] (nDetermineTSC) {Determine \textcolor{blue}{$\tilde{\textbf{H}}_i, \tilde{\textbf{H}}_i, \tilde{\textbf{A}}_i, \tilde{\textbf{a}}_i, \tilde{\textbf{B}}_i, \tilde{\textbf{b}}_i, \tilde{\textbf{A}}_{ij}'s$}};
\node at (0,2) [processnode] (nSolveTSC) {$\tilde{\textbf{x}}_i^{(TSC)} \gets$\textcolor{blue}{\eqref{eq_solve_TSC}}};
\node at (0,1) [processnode] (nDeploy) {Apply optimal control decisions};
\node at (0,0) [initendnode] (nEnd) {Wait to next step};
\node at (4,0) [textnode] (nEnd_temp) {};
%Arrows
\draw[->,{line width=1pt},black] (nStart.south)--(nStart.south|-nDeterminePC.north);
\draw[->,{line width=1pt},black] (nDeterminePC.south)--(nDeterminePC.south|-nSolvePC.north);
\draw[->,{line width=1pt},black] (nSolvePC.south)--(nSolvePC.south|-nDetermineTSC.north);
\draw[->,{line width=1pt},black] (nDetermineTSC.south)--(nDetermineTSC.south|-nSolveTSC.north);
\draw[->,{line width=1pt},black] (nSolveTSC.south)--(nSolveTSC.south|-nDeploy.north);
\draw[->,{line width=1pt},black] (nDeploy.south)--(nDeploy.south|-nEnd.north);
\draw[-,{line width=1pt},black] (nEnd.east)--(nEnd_temp.west);
\draw[-,{line width=1pt},black] (nEnd_temp.west)--(nStart_temp.west);
\draw[->,{line width=1pt},black] (nStart_temp.west)--(nStart.east);
\end{tikzpicture}}
\caption{Diagram for proposed control strategy.} \label{fig_diagram}
\end{figure}
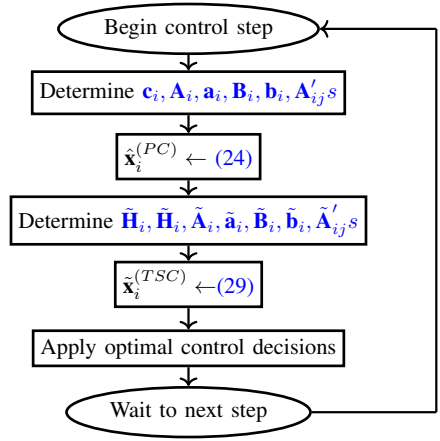
We summarize the proposed traffic control strategy for each agent $\mathcal{S}_i$ as the diagram in Fig. \ref{fig_diagram}. At the beginning of each control time step, the agent $\mathcal{S}_i$ uses sensors and appropriate estimation methods to measure and estimate the current traffic states, i.e., $n_z(t)$ and $q_z(t)$ if $\sigma(z) \in \mathcal{J}_i^B$ for all road links in $\{z: \sigma(z) \in \mathcal{J}_i \textrm{ or } \tau(z) \in \mathcal{J}_i\}$, as well as the parameters in traffic model. Then this agent can determine its local information in the formulation of the problem $\textbf{(PC)}$, stacked by the matrices $\textbf{A}_i, \textbf{B}_i, \textbf{A}_{ij} \forall \mathcal{S}_j \in \mathcal{N}_{\mathcal{S}_i}$, and the vectors $\textbf{a}_i, \textbf{b}_i$.
Following Algorithm \ref{alg_proposed_opt}, the agent $\mathcal{S}_i$ cooperates with its neighbors in $\mathcal{N}_{\mathcal{S}_i}$ to find the optimal solution of the problem \eqref{eq_PC_compacted} as $\hat{\textbf{x}}_i^{(PC)}$ in \eqref{eq_solve_PC}. After that, this agent formulates its corresponding parts in the problem \eqref{eq_TSC_tosolve} and employs Algorithm \ref{alg_proposed_opt} to cooperatively solve \eqref{eq_TSC_tosolve}. The solution $\tilde{\textbf{x}}_i^{(TSC)}$ in \eqref{eq_solve_TSC} corresponding to the local part of the agent $\mathcal{S}_i$ in the optimal solution of lexicographic MPC traffic control strategy is obtained. Only the control decisions for the current control time step, i.e., $g_p(t|t) \forall p: p \in \mathcal{P}_{J_v}, J_v \in \mathcal{J}_i^I$, $f_z(t|t) \forall z: \tau(z) \in \mathcal{J}_i^I$, and $f_z^u(t|t) \forall z: \sigma(z) \in \mathcal{J}_i^B$, are applied for controlling subnetwork $\mathcal{S}_i$. Then agents wait until the next control step.
%%%%%%%%%%%%%%%%%%%%%%%%%%%%%%%%%%%%%%%%%%%%%%%%%%%%%%%%%%%%%%%%%%%%%%%%%%%%%%%%%%%%%%%%%%
%%%%%%%%%%%%%%%%%%%%%%%%%%%%%%%%%%%%%%%%%%%%%%%%%%%%%%%%%%%%%%%%%%%%%%%%%%%%%%%%%%%%%%%%%%
\section{Simulations}
%%%%%%%%%%%%%%%%%%%%%%%%%%%%%%%%%%%%%%%%%%%%%%%%%%%%%%%%%%%%%%%%%%%%%%%%%%%%%%%%%%%%%%%%%%
This section aims to show the effectiveness of the proposed traffic control strategy. Its performance in controlling an UTN is tested by microscopic simulations in VISSIM. All codes for simulation implementation and solution methods are written in MATLAB.
%%%%%%%%%%%%%%%%%%%%%%%%%%%%%%%%%%%%%%%%%%%%%%%%%%%%%%%%%%%%%%%%%%%%%%%%%%%%%%%%%%%%%%%%%%
\subsection{UTN description and Simulation setup}
%%%%%%%%%%%%%%%%%%%%%%%%%%%%%%%%%%%%%%%%%%%%%%%%%%%%%%%%%%%%%%%%%%%%%%%%%%%%%%%%%%%%%%%%%%
The tested UTN is built in VISSIM, as shown in Fig. \ref{fig_testednetwork}. It consists of $10$ boundary junctions (marked by blue circles) and $24$ internal junctions (marked by red/yellow squares). There are two roads with different directions between every pair of connecting junctions. The length of roads is in the range of $[280, 400]$ meters. For the junctions corresponding to red (resp. yellow) squares, the sequences of their traffic signal phases are set as Type 3 (resp. Type 2) in TABLE. \ref{tbl_sequencePhases} and their incoming roads have $5$ (resp. 3) lanes. We assume that only one lane is reserved for turning left on every incoming road of junctions with Type 3. The saturation flows of roads are set in the range $[0.55, 0.68]$ (veh./second/lane).
The downstream traffic flow in each destination road has an upper bound of $36$ (veh./min). The ratios for turning right and turning left movements in roads are set randomly in the range $[0.15, 0.25]$.

\begin{figure}[htb]
\centering
\includegraphics[width=0.3\textwidth]{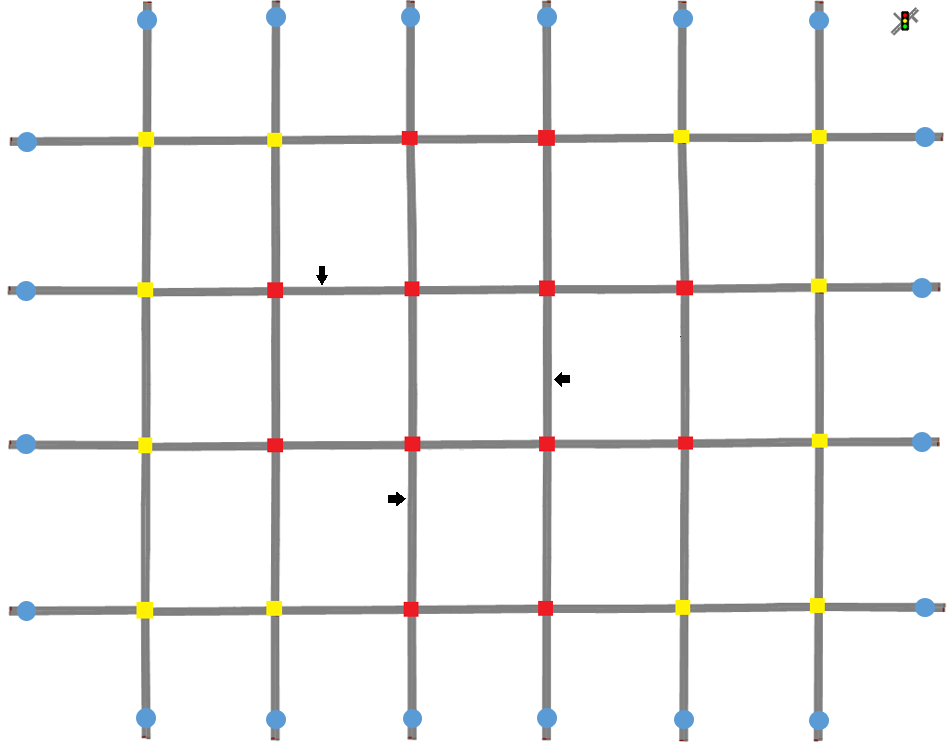}
\caption{The tested UTN built in VISSIM.}\label{fig_testednetwork}
\end{figure}

We set the control time length as $T = 60$ seconds and the lost time as $L = 4$ seconds for all junctions. The total time for each simulation is chosen as $2$ hours corresponding to $120$ control time steps. In this part, three scenarios of traffic demands corresponding to under-saturated, saturated and over-saturated conditions are considered.
\begin{table}
\centering
\caption{Nominal traffic demand $d_B^*(\Delta t)$ (veh./hour).}\label{tb_demand}
\scalebox{0.725}{
\begin{tabular}{c|cccccc}
 \hline
 $\Delta t$ (min.)  & 1 - 20 & 21 - 40 & 41 - 60 & 61 - 80 & 81 - 100 & 101 - 120 \\
 \hline
under-saturated & $1400$ & $1600$ & $1450$ & $1300$ & $1150$ & $1000$\\
saturated & $1500$ & $1800$ & $1650$ & $1500$ & $1350$ & $1200$\\
over-saturated & $1700$ & $2000$ & $1850$ & $1700$ & $1450$ & $1200$\\
 \hline
\end{tabular}}
\end{table}
The traffic demand for each source road having three lanes (resp. five lanes) is set as $d_B^*(\Delta t)$ (resp. $1.25 d_B^*(\Delta t)$), where $d_B^*(\Delta t)$ is described in TABLE. \ref{tb_demand}. In addition to boundary junctions, vehicles enter the UTN from places represented by black arrow in Fig. \ref{fig_testednetwork}. We set these inflows have the rate of $550$ (veh/hour). 

In the following, we compare four traffic control strategies:
\begin{enumerate}
\item Strategy 1 (Predetermined traffic signal control without perimeter control): There are no perimeter control, i.e., the exogenous traffic inflows at source roads are set equal to the corresponding traffic demands, i.e., $f_z^u(t+k|t) = d_z(t+k|t) \forall z: \sigma(z) \in \mathcal{J}^B, k \ge 0$. For control decisions of internal junctions, we run several predetermined signal control time plans and choose the best ones for each traffic demand scenarios. We refer this strategy as the standard control method, which will be used as benchmark to compare with other traffic control strategies.
\item Strategy 2 (Two-layer adaptive signal control framework integrating max pressure with perimeter control): This approach is proposed in \cite{DimitriosTsitsokas2023} combining centralized, aggregated perimeter control strategy, with distributed Max Pressure feedback controllers for internal junctions.
\item Strategy 3 (Weighted MPC-based traffic perimeter and signal control): The control decisions for boundary and internal junctions are determined by solving an MPC-based traffic control problem (\ref{eq_problemTP}-\ref{eq_smooth_UTN}) with the cost function $\Phi(t) = \theta\Phi^{(1)}(t) + 0.25\Phi^{(2)} + \Phi^{(3)}(t)$. The weight $\theta$ corresponds to the aim of maximizing the capacity of the UTN. This weight should be chosen significantly large. In this part, we choose $\theta = 5000$.
\item Strategy 4 (Lexicographic MPC-based traffic perimeter and signal control): This is our proposed traffic control strategy described as the diagram in Fig. \ref{fig_diagram}. For the constraint \eqref{eq_smooth_UTN} aiming to guaranteeing the smooth operation of the UTN, we choose $\gamma_z = 0.5, \forall z \in \mathcal{L}$.
\end{enumerate}
We choose the horizontal time length $K = 4$ for all MPC-based traffic control problems in Strategy 3 and Strategy 4.
%%%%%%%%%%%%%%%%%%%%%%%%%%%%%%%%%%%%%%%%%%%%%%%%%%%%%%%%%%%%%%%%%%%%%%%%%%%%%%%%%%%%%%%%%%
\subsection{Results of microscopic simulation}
%%%%%%%%%%%%%%%%%%%%%%%%%%%%%%%%%%%%%%%%%%%%%%%%%%%%%%%%%%%%%%%%%%%%%%%%%%%%%%%%%%%%%%%%%%
Fig. \ref{fig_s1results}, Fig. \ref{fig_s2results}, and Fig. \ref{fig_s3results} represent simulation results in under-saturated, saturated, and over-saturated scenarios, respectively. The black, red, green, and blue lines are corresponding to the simulation results of Strategy 1, Strategy 2, Strategy 3, and Strategy 4, respectively.
\begin{figure}[htb]
\centering
\includegraphics[width=0.45\textwidth]{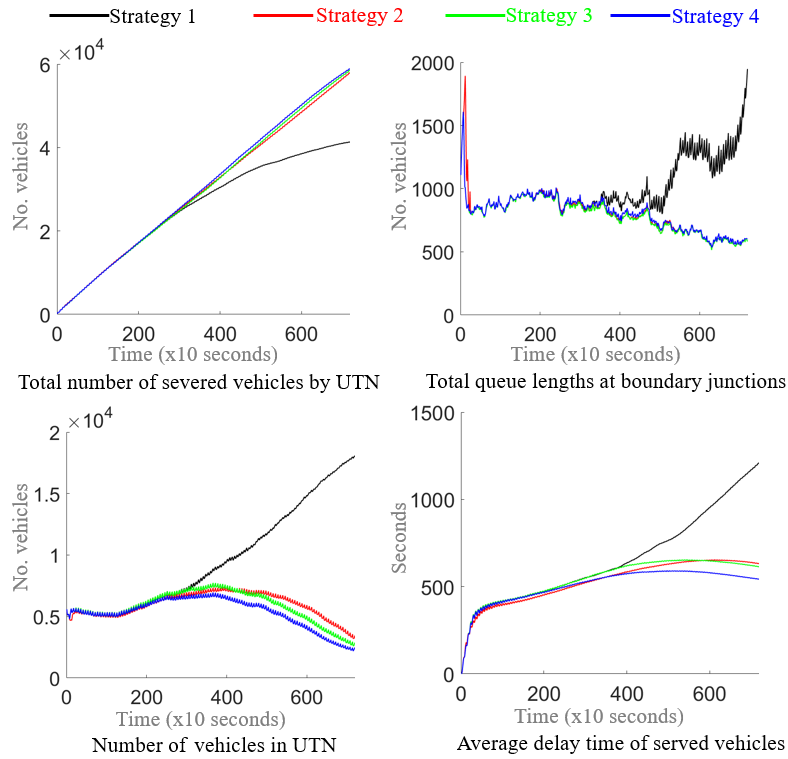}
\caption{Simulation results in under-saturated scenario. }\label{fig_s1results}
\end{figure}
\begin{figure}[htb]
\centering
\includegraphics[width=0.45\textwidth]{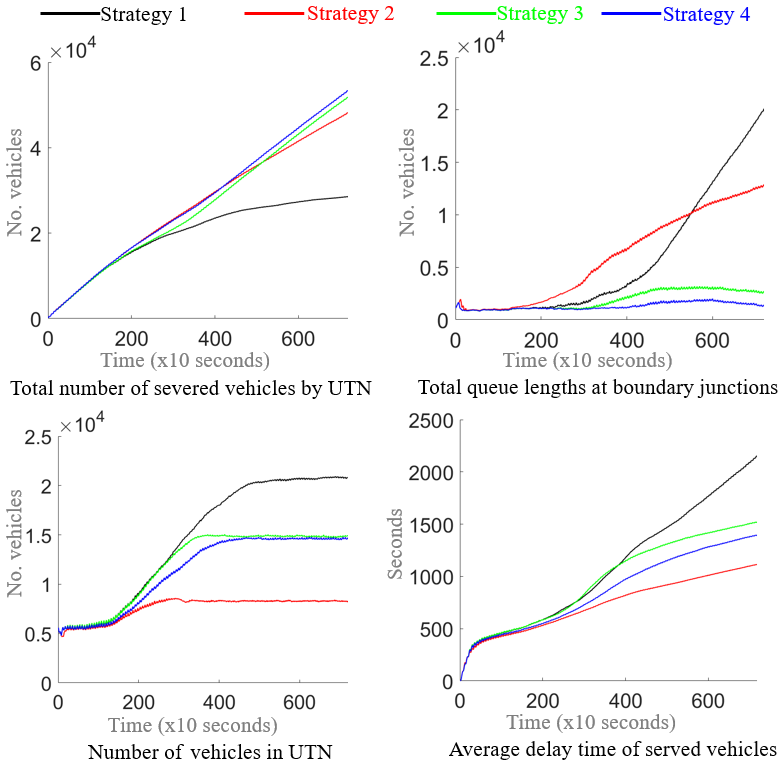}
\caption{Simulation results in saturated scenario.}\label{fig_s2results}
\end{figure}
\begin{figure}[htb]
\centering
\includegraphics[width=0.45\textwidth]{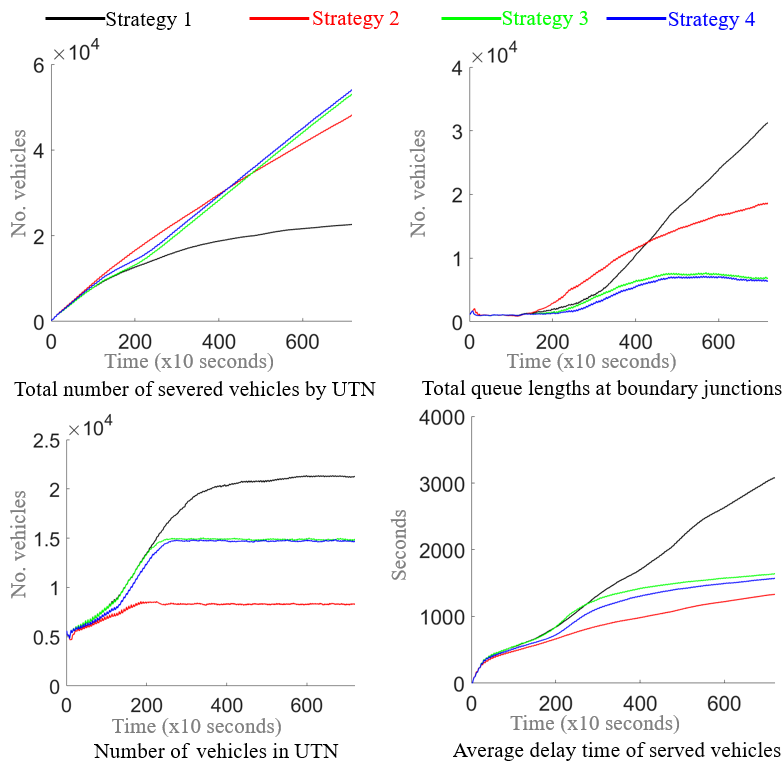}
\caption{Simulation results in over-saturated scenario.}\label{fig_s3results}
\end{figure}
In Fig. \ref{fig_s1results} (similar to Fig. \ref{fig_s2results} and Fig. \ref{fig_s3results}), the top-left figure depicts the number of vehicles served by UTN over the simulation time; the top-right figure is the evolution of total queue lengths at boundary junctions; the bottom-left figure shows the network accumulation; and the bottom-right figures present the average travel time delay of vehicles served by the UTN over the simulation time. The travel time delay of vehicles are measured by VISSIM. Each step in the horizontal axes of these figures is equivalent to ten seconds.

\begin{figure}[htb]
\centering
\includegraphics[width=0.45\textwidth]{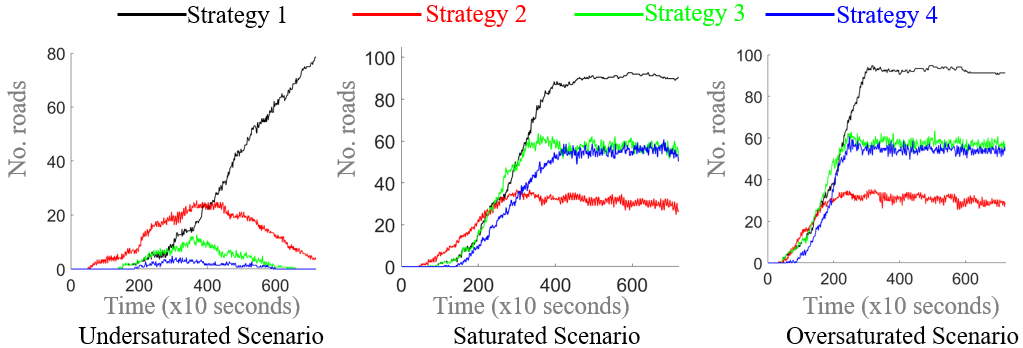}
\caption{No. roads with relative occupation larger than 0.6.}\label{fig_medium_roads}
\end{figure}
\begin{figure}[htb]
\centering
\includegraphics[width=0.45\textwidth]{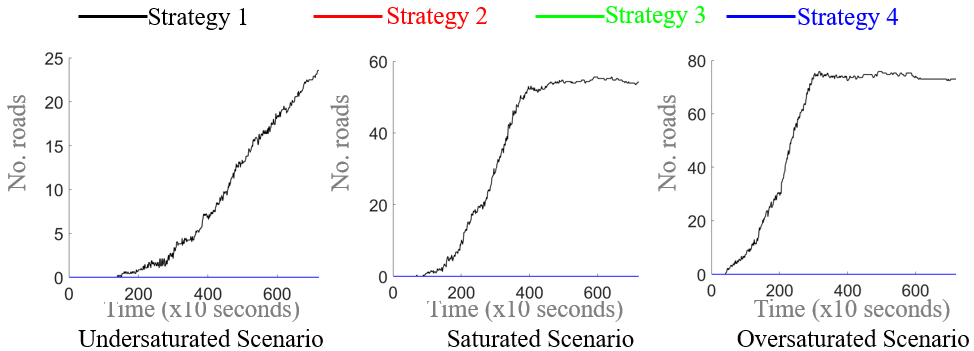}
\caption{No. roads with relative occupation larger than 0.8.}\label{fig_high_roads}
\end{figure}

Another interested performance index is the level of traffic congestion in the UTN, which is measured by the relative occupation of all roads.
Define $n_z^{rel}(\Delta \tilde{t}) = \frac{n_z(10\Delta \tilde{t})}{\overline{n_z}}$ for every road where the index $\Delta \tilde{t} = 1, 2, \dots, 720$.
Let $\mathcal{O}_{r}(\Delta \tilde{t}) = \{z \in \mathcal{R}: n_z^{rel}(\Delta \tilde{t}) > r\}$ be the set of road links having the relative occupation larger than $r$.
Fig. \ref{fig_medium_roads} and Fig. \ref{fig_high_roads} present the evolution of $|\mathcal{O}_{0.6}(\Delta \tilde{t})|$ and $|\mathcal{O}_{0.8}(\Delta \tilde{t})|$, respectively. We can see that the traffic control strategies using perimeter concept (i.e., Strategy 2-4) keep the relative occupation of all roads less than $0.8$ (as shown in Fig. \ref{fig_high_roads}). 

From the performance indexes shown in Fig. \ref{fig_s1results}–\ref{fig_scresults}, it is evident that advanced traffic control strategies (i.e., Strategies 2–4) significantly improve traffic conditions compared to the standard control method (Strategy 1). These advanced strategies not only reduce the travel time delays of vehicles and increase the total network throughput, but also enable the UTN to accommodate a larger number of vehicles. In particular, when traffic demand increases (under saturated and oversaturated scenarios), Strategy 1 faces a high risk of traffic congestion as multiple road links experience high relative occupancy. With Strategy 2, congestion starts to emerge due to the excessive number of vehicles entering the UTN. Consequently, travel time delays rise substantially, and the capacity of the UTN to admit vehicles is noticeably reduced. By restricting traffic inflows into the UTN, Strategies 2–4 are able to prevent the onset of traffic congestion.

In the undersaturated scenario, the performance indexes of Strategies 2–4 are quite similar. Strategy 4 achieves the highest number of served vehicles and the lowest delay; however, the differences are marginal. Under saturated and oversaturated scenarios, Strategy 3 and Strategy 4 can admit more vehicles than Strategy 2 while still preventing traffic congestion within the UTN. Nevertheless, due to the reduction of available free space, the average travel time delays of vehicles operating in the UTN under Strategies 3 and 4 are higher than those under Strategy 2. This outcome is predictable, since the max-pressure control method is known to maximize throughput and minimize travel time delay when traffic demand is low. However, this does not imply that Strategy 2 is superior to Strategies 3 and 4 in enhancing mobility, as it prolongs vehicle waiting times at boundary junctions. Owing to their ability to accommodate a larger number of vehicles, the MPC-based control methods (Strategies 3 and 4) ultimately serve more vehicles overall.
In addition, we can reduce the average travel time delay of served vehicles for Strategy 4 (similar in Strategy 3) by decreasing the parameter $\gamma_z$ in the constraint \eqref{eq_smooth_UTN}. Let Strategy 5 be similar to Strategy 4, except setting the parameter $\gamma_z = 0.3 \forall z \in \mathcal{L}$. Fig. \ref{fig_scresults} shows the simulation results in the saturated scenario for Strategy 1 (pink lines), Strategy 4 (red lines), and Strategy 5 (black lines).
By reducing the parameter $\gamma_z, \forall z \in \mathcal{L}$, Strategy 5 forces each road link to reserve more free space for vehicles. We can see that the average travel time delay is significantly reduced when using Strategy 5, but, in return, the total queue length at boundary junctions increases (compared to Strategy 4).
\begin{figure}[htb]
\centering
\includegraphics[width=0.45\textwidth]{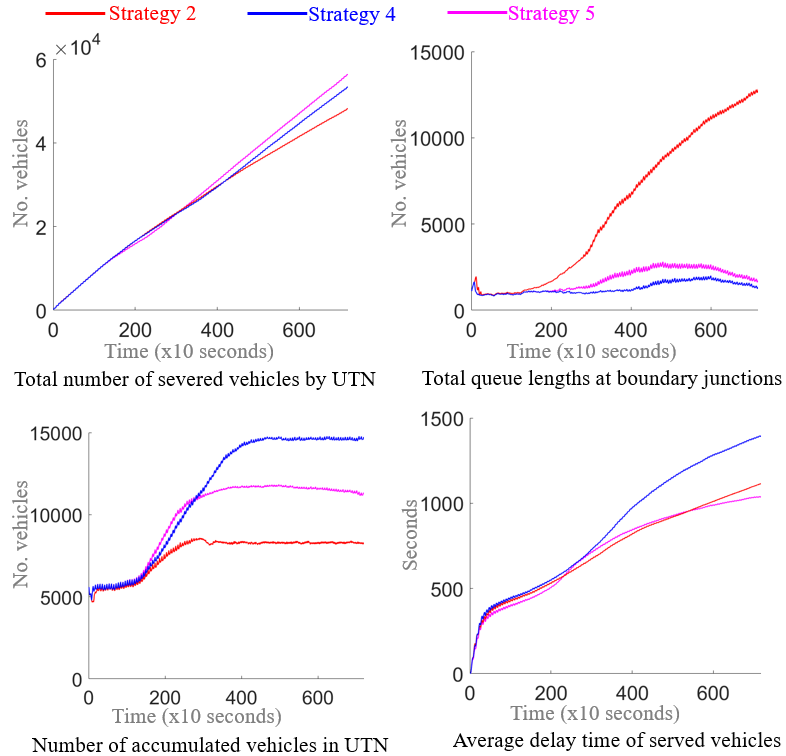}
\caption{Simulation results in saturated scenario for Strategy 1, Strategy 4 and Strategy 5.}\label{fig_scresults}
\end{figure}

Based on the simulation results shown in Fig. \ref{fig_s1results}-\ref{fig_scresults}, we can conclude that our proposed traffic control strategy successfully integrates perimeter control and traffic signal control. It not only protects the UTN from the risk of traffic congestion but also optimizes the capacity of the UTN.
Strategy 3 using a weighted combination of multiple objectives can approach more closer to the control performance of Strategy 4 if setting the weight $\theta$ larger. However, this setting may also make a bigger computation load as shown in next subsection.
%%%%%%%%%%%%%%%%%%%%%%%%%%%%%%%%%%%%%%%%%%%%%%%%%%%%%%%%%%%%%%%%%%%%%%%%%%%%%%%%%%%%%%%%%%
\subsection{Computational load of the proposed control method}
%%%%%%%%%%%%%%%%%%%%%%%%%%%%%%%%%%%%%%%%%%%%%%%%%%%%%%%%%%%%%%%%%%%%%%%%%%%%%%%%%%%%%%%%%%
In this paper, Algorithm \ref{alg_proposed_opt} is implemented for Strategy 3 and Strategy 4 by MATLAB on a computer having chip Intel Core i5 8500 and $16$ Gb RAM. The penalty parameter is chosen as $\rho = 1 $ for the linear program (i.e., the problem $\textbf{(PC)}$ in Strategy 4) and $\rho = 0.1$ for quadratic programs (i.e., the problem $\textbf{(TSC)}$ in Strategy 4 and the problem (\ref{eq_problemTP}-\ref{eq_smooth_UTN}) in Strategy 3). The convergence of Algorithm \ref{alg_proposed_opt} for solving MPC-based traffic control problems is illustrated in Fig. \ref{fig_computation_typical}.
\begin{figure}[htb]
\centering
\includegraphics[width=0.48\textwidth]{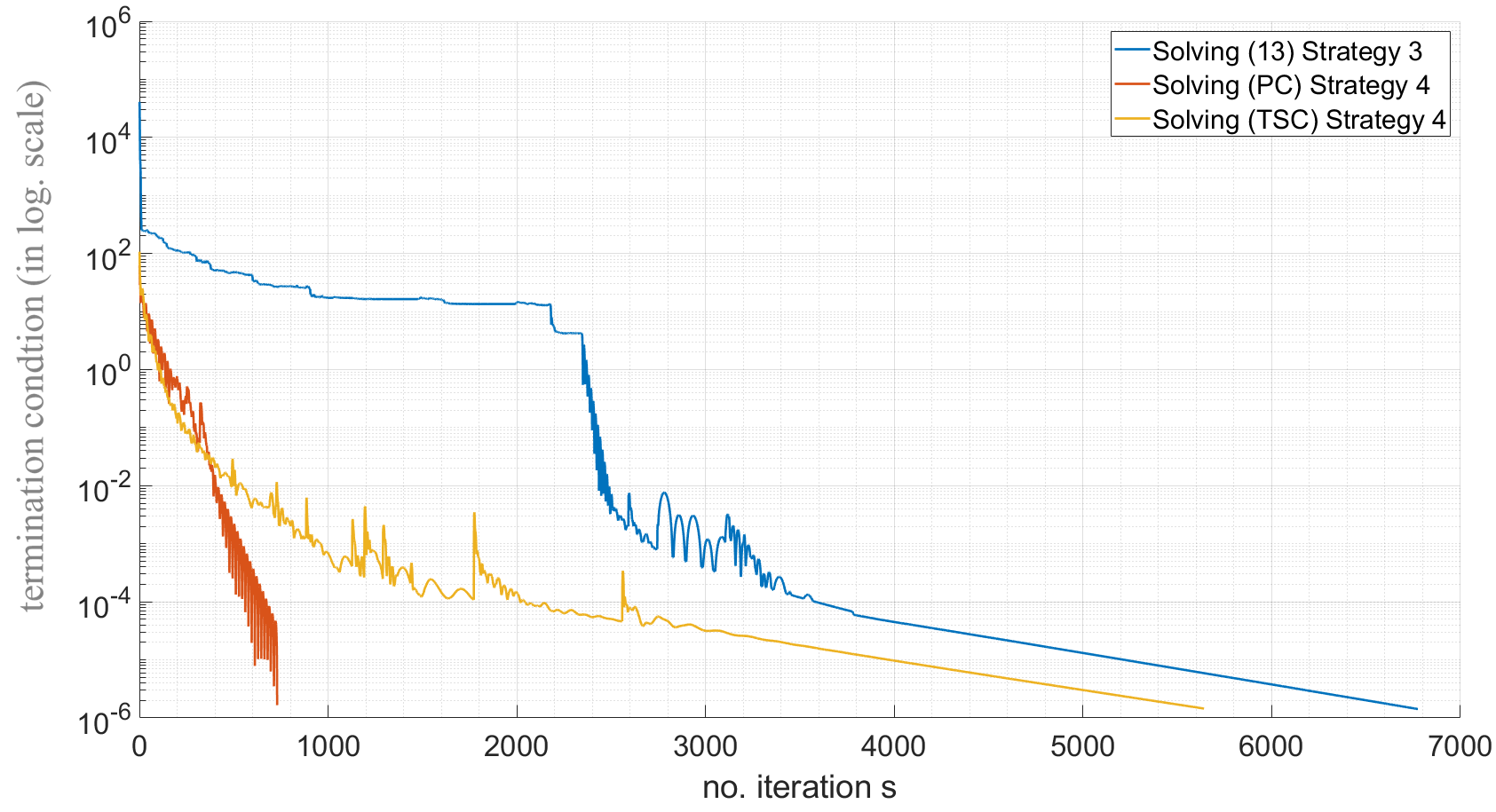}
\includegraphics[width=0.48\textwidth]{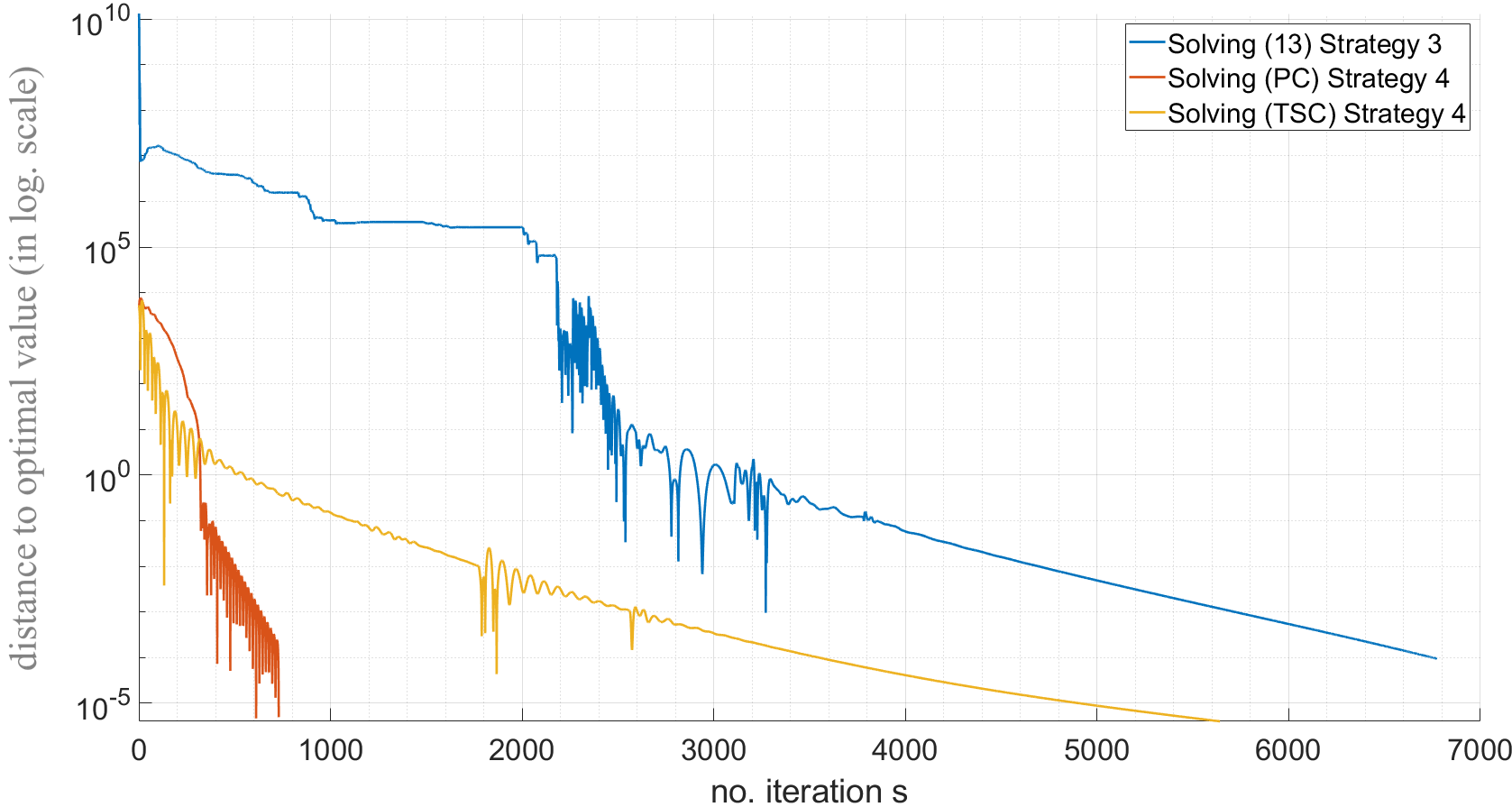}
\caption{The convergence of Algorithm \ref{alg_proposed_opt} when solving MPC-based traffic control problems. The top figure shows the evolution of termination conditions, while the bottom figure present the difference between the estimated cost and the optimal cost value.}\label{fig_computation_typical}
\end{figure}
The top figure shows the evolution of termination condition, which is formulated as $\sum_{i = 1}^{N}\Bigl(||\textbf{V}_i\textbf{x}_i(s) - \textbf{y}_i(s) - \textbf{v}_i||_{\infty} + \sum_{\mathcal{S}_j \in \mathcal{N}_{\mathcal{S}_i}}||\textbf{U}_{ij}\textbf{x}_i(s) - \textbf{y}_{ij}(s)||_{\infty}\Bigr)$. The bottom figure is corresponding to the convergence of the estimated cost to the optimal cost value.
Based on the simulation results in these figures, it is reasonable to choose $tol = 10^{-5}/\rho$ for the terminated condition \eqref{eq_terminated}.
We also observe that the number of iterations required for the problem $\textbf{(PC)}$ in Strategy 4 is significant less than the one required for two quadratic programs. Because of large parameter $\theta$ in the cost function, it takes more time to solve the MPC-based traffic control problem in Strategy 3 than the problem $\textbf{(TSC)}$ in Strategy 4.

We further measure the time execution when employing Algorithm \ref{alg_proposed_opt} in distributed manner. Let the considered UTN be divided into $24$ subregions. Each of them consists of one internal junction and the connecting boundary junctions (if existed). For the distributed setup, we use functions \textit{tic} and \textit{toc} in MATLAB to measure the time taken by one local controller to update its local variables in every iteration of Algorithm \ref{alg_proposed_opt}. The maximum value of the time length taken by local controllers is added to the total execution time for running Algorithm \ref{alg_proposed_opt}. This process is repeated until the termination condition \eqref{eq_terminated} is satisfied.
\begin{figure}[htb]
\centering
\includegraphics[width=0.35\textwidth]{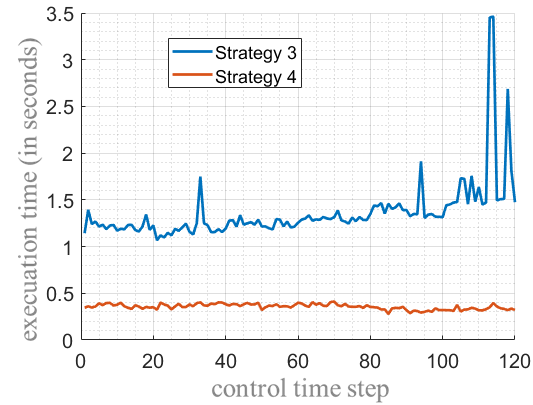}
\includegraphics[width=0.35\textwidth]{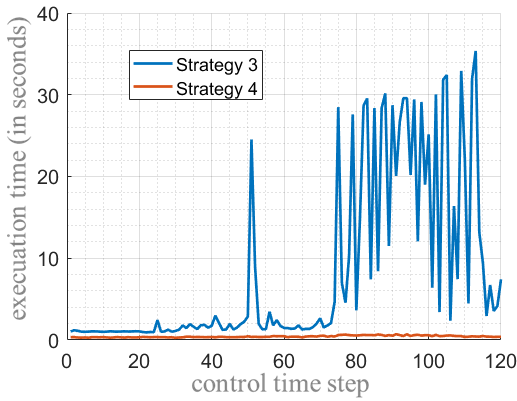}
\caption{The required execution time in each cycle for under-saturated (top) and saturated (bottom) scenarios.}\label{fig_computation_strategies}
\end{figure}
The execution time when using Algorithm \ref{alg_proposed_opt} to determine the control decisions in each control time step of Strategy 3 and Strategy 4 is presented in Fig. \ref{fig_computation_strategies}. The top figure is corresponding to the under-saturated scenario. The bottom figure shows simulation results for the saturated scenario, which is similar to the one for over-saturated scenario. Note that, the execution time of Strategy 4 is the sum of the ones in solving two MPC-based traffic control problems, $\textbf{(PC)}$ and $\textbf{(TSC)}$.
TABLE~\ref{tb_convergence} shows the average and maximum computation load when using Algorithm \ref{alg_proposed_opt} to solve the problems $\textbf{(PC)}$ and $\textbf{(TSC)}$ in distributed manner for some different horizontal time $K$. In each cell of this table, we give the number of required iterations in black and the execution time (seconds) in blue.
\begin{table}[htb]
\centering
\caption{Computational load of Algorithm \ref{alg_proposed_opt} in Strategy 4 with different $K$.}
\label{tb_convergence}
\scalebox{0.85}{
\begin{tabular}{c|c|c|c|c}
\hline
\multirow{2}{*}{K} & \multicolumn{2}{c|}{Solving $\textbf{(PC)}$} & \multicolumn{2}{c}{Solving $\textbf{(TSC)}$}\\
\cline{2-5}
 & average & maximum & average & maximum\\
\hline
1 & $354$ \textcolor{blue}{$(0.025)$} & $420$ \textcolor{blue}{$(0.031)$} & $744$ \textcolor{blue}{$(0.051)$} & $924$ \textcolor{blue}{$(0.0643)$}\\
\hline
2 & $504$ \textcolor{blue}{$(0.048)$} & $738$ \textcolor{blue}{$(0.0712)$} & $980$ \textcolor{blue}{$(0.114)$} & $1052$ \textcolor{blue}{$(0.132)$}\\
\hline
3 & $541$ \textcolor{blue}{$(0.098)$} & $817$ \textcolor{blue}{$(0.148)$} & $1225$ \textcolor{blue}{$(0.285)$} & $1635$ \textcolor{blue}{$(0.382)$}\\
\hline
4 & $595$ \textcolor{blue}{$(0.12)$} & $858$ \textcolor{blue}{$(0.173)$} & $1458$ \textcolor{blue}{$(0.32)$} & $1867$ \textcolor{blue}{$(0.419)$}\\
 \hline
5 & $827$ \textcolor{blue}{$(0.361)$} & $977$ \textcolor{blue}{$(0.427)$} & $1596$ \textcolor{blue}{$(1.162)$} & $1988$ \textcolor{blue}{$1.448()$}\\
 \hline
6 & $1006$ \textcolor{blue}{$(0.507)$} & $1230$ \textcolor{blue}{$(0.619)$} & $1951$ \textcolor{blue}{$(1.374)$} & $2155$ \textcolor{blue}{$(1.527)$}\\
 \hline
\end{tabular}}
\end{table}
As the execution time is significantly shorter than the time interval between two consecutive control steps in all cases, it is reasonable to state that our proposed traffic control strategy can be used in real-time when good communication among local controllers is available.
%%%%%%%%%%%%%%%%%%%%%%%%%%%%%%%%%%%%%%%%%%%%%%%%%%%%%%%%%%%%%%%%%%%%%%%%%%%%%%%%%%%%%%%%%%
%%%%%%%%%%%%%%%%%%%%%%%%%%%%%%%%%%%%%%%%%%%%%%%%%%%%%%%%%%%%%%%%%%%%%%%%%%%%%%%%%%%%%%%%%%
\section{Conclusion}
%%%%%%%%%%%%%%%%%%%%%%%%%%%%%%%%%%%%%%%%%%%%%%%%%%%%%%%%%%%%%%%%%%%%%%%%%%%%%%%%%%%%%%%%%%
Traffic signal control and perimeter control are widely adopted strategies for alleviating congestion in UTNs, but each has inherent limitations. The primary objective of traffic signal control is to optimize existing road infrastructure for increasing network throughput and reducing vehicle delays. However, even the most advanced signal control systems fall short of preventing congestion when the volume of incoming traffic exceeds the network's capacity. Current research that incorporates perimeter control, whether used alone or in combination with signal control, often fails to produce a fully optimized signal control plan that maximizes overall network efficiency.
Motivated by these observations, this paper presents a novel framework that integrates traffic perimeter control with traffic signal control, formulated as a lexicographic multi-objective optimization problem. The proposed approach initially regulates traffic inflows at boundary junctions to maximize network capacity while ensuring a smooth operation of the whole UTN. Following this, the signal timings are collaboratively optimized to improve traffic conditions under the regulated inflows. An MPC strategy is employed to structure the control problems, ensuring adherence to safety and capacity constraints at junctions and on roads.
To manage the computational complexity, we decompose the UTN into multiple subnetworks, each managed by a local agent. A distributed solution method based on ADMM is developed, enabling each agent to optimize control decisions using only local information from its own subnetwork and neighboring agents.
The effectiveness of the proposed framework is validated through simulations using VISSIM and MATLAB, demonstrating significant improvements in traffic network management.
%%%%%%%%%%%%%%%%%%%%%%%%%%%%%%%%%%%%%%%%%%%%%%%%%%%%%%%%%%%%%%%%%%%%%%%%%%%%%%%%%%%%%%%%%%
%%%%%%%%%%%%%%%%%%%%%%%%%%%%%%%%%%%%%%%%%%%%%%%%%%%%%%%%%%%%%%%%%%%%%%%%%%%%%%%%%%%%%%%%%%
\appendix
%%%%%%%%%%%%%%%%%%%%%%%%%%%%%%%%%%%%%%%%%%%%%%%%%%%%%%%%%%%%%%%%%%%%%%%%%%%%%%%%%%%%%%%%%%
\subsection{Distributed solution method based on ADMM}\label{subADMM}
%%%%%%%%%%%%%%%%%%%%%%%%%%%%%%%%%%%%%%%%%%%%%%%%%%%%%%%%%%%%%%%%%%%%%%%%%%%%%%%%%%%%%%%%%%
In this part, we use an Alternating Direction Multiplier Method (ADMM) algorithm to design a distributed solution method for solving the following optimization problem:
\begin{subequations}\label{eq_distributedOpt}
\begin{align}
\min\limits_{\textbf{x}_i, \forall i} \textrm{ }& \sum\limits_{i = 1}^{N}\left(\frac{1}{2}\textbf{x}_i^T\textbf{W}_i\textbf{x}_i + \textbf{w}_i^T\textbf{x}_i\right)\\
\textrm{s.t. }& \textbf{U}_i\textbf{x}_i = \textbf{u}_i, \forall i = 1, \dots, N,\\
& \textbf{V}_i\textbf{x}_i \le \textbf{v}_i, \forall i = 1, \dots, N,\\
& \textbf{U}_{ij}\textbf{x}_i + \textbf{U}_{ji}\textbf{x}_j = \textbf{0}, \forall \mathcal{S}_j \in \mathcal{N}_{\mathcal{S}_i}, i = 1, \dots, N.
\end{align}
\end{subequations}
where $\textbf{x}_i$ is a variable vector, $\textbf{U}_i, \textbf{V}_i$, $\textbf{U}_{ij} \forall \mathcal{S}_j \in \mathcal{N}_{\mathcal{S}_i}$, and $\textbf{u}_i, \textbf{v}_i$ are constant matrices and vectors with suitable dimension known by only the agent $\mathcal{S}_i$, for all $i = 1, \dots, N$. Here, we assume that the matrix $\textbf{U}_i$ is full row rank and the matrix $\textbf{W}_i$ is positive semidefinite, for all $i = 1, \dots, N$.
Let $\textbf{y}_i = \textbf{V}_i\textbf{x}_i - \textbf{v}_i$, $\textbf{y}_{ij} = \textbf{U}_{ij}\textbf{x}_i \forall \mathcal{S}_j \in \mathcal{N}_{\mathcal{S}_i}$ and define the sets $\mathcal{X}_i = \{\textbf{x}_i: \textbf{U}_i\textbf{x}_i = \textbf{u}_i\}, \mathcal{Y}_i = \{\textbf{y}_i: \textbf{y}_i \le \textbf{0}\}$, $\Omega_{ij} = \{(\textbf{y}_{ij}, \textbf{y}_{ji}): \textbf{y}_{ij} + \textbf{y}_{ji} = \textbf{0}\}$. The problem \eqref{eq_distributedOpt} can be rewritten as
\begin{subequations}\label{eq_distributedOpt_ADMM}
\begin{align}
\min\limits_{\substack{\textbf{x}_i \in \mathcal{X}_i, \textbf{y}_i \in \mathcal{Y}_i,\\ (\textbf{y}_{ij}, \textbf{y}_{ji}) \in \Omega_{ij} \forall \mathcal{S}_j \in \mathcal{N}_{\mathcal{S}_i}}}&  \sum\limits_{i = 1}^{N}\left(\frac{1}{2}\textbf{x}_i^T\textbf{W}_i\textbf{x}_i + \textbf{w}_i^T\textbf{x}_i\right)\\
\textrm{ s.t. }&\left\{ \begin{matrix} \textbf{V}_i\textbf{x}_i - \textbf{v}_i = \textbf{b}_i,\\ \textbf{U}_{ij}\textbf{x}_i - \textbf{y}_{ij} = \textbf{0}, \forall \mathcal{S}_j \in \mathcal{N}_{\mathcal{S}_i},\\ 
\end{matrix}\right., \forall i = 1, \dots, N.
\end{align}
\end{subequations}
It is easy to verify that the problem \eqref{eq_distributedOpt_ADMM} has the form of \eqref{eq_ADMMproblem}.
\begin{equation}\label{eq_ADMMproblem}
\min\limits_{\textbf{x} \in \mathcal{X}, \textbf{y} \in \mathcal{Y}} \textrm{ }\Psi_x(\textbf{x}) + \Psi_y(\textbf{y}) \textrm{ s.t. } \textbf{X}\textbf{x} + \textbf{Y}\textbf{y} = \textbf{z}.
\end{equation}
where $\textbf{x} = \textrm{col}\{\textbf{x}_1, \textbf{x}_2, \cdots, \textbf{x}_N\}$, $\textbf{y} = \textrm{col}\{\textrm{col}\{\textbf{y}_i, \textrm{col}\{\textbf{y}_{ij}: \mathcal{S}_j \in \mathcal{N}_{\mathcal{S}_i}\}\}: 1 \le i \le N\}$, $\Psi_x(\textbf{x}) = \sum_{i = 1}^{N}\left(\frac{1}{2}\textbf{x}_i^T\textbf{W}_i\textbf{x}_i + \textbf{w}_i^T\textbf{x}_i\right), \Psi_y(\textbf{y}) = 0$, $\textbf{X} = \textrm{blkdiag}\{\textbf{X}_1, \textbf{X}_2, \dots, \textbf{X}_N\}$  where $\textbf{X}_i = \textrm{blkdiag}\{\textbf{V}_i, \textrm{blkdiag}\{\textbf{U}_{ij}: \mathcal{S}_j \in \mathcal{N}_{\mathcal{S}_i}\}\}$, $\textbf{Y} = -\textbf{I}$, $\textbf{z} = \textrm{col}\{\textbf{z}_1, \textbf{z}_2, \dots, \textbf{z}_N\}$ where $\textbf{z}_i = \textrm{col}\{\textbf{v}_i, \textbf{0}\}$, and the sets $\mathcal{X} = \bigtimes_{i = 1}^{N} \mathcal{X}_i$, $\mathcal{Y} = \bigtimes_{i = 1}^{N} \left( \Omega_i \times \bigtimes_{\mathcal{S}_j \in \mathcal{N}_{\mathcal{S}_i}} \Omega_{ij}\right)$. Here, $\times$ denotes the Cartesian product of two sets and $\bigtimes_{i = 1}^{N} \mathcal{X}_i = \mathcal{X}_1 \times \cdots \times \mathcal{X}_N$.
%%%%%%%%%%%%%%%%%%%%%%%%%%%%%%%%%%%%%%%%%%%%%%%%%%%%%%%%%%%%%%%%%%%%%%%%%%%%%%%%%%%%%%%%%%
\subsubsection{ADMM algorithm}
%%%%%%%%%%%%%%%%%%%%%%%%%%%%%%%%%%%%%%%%%%%%%%%%%%%%%%%%%%%%%%%%%%%%%%%%%%%%%%%%%%%%%%%%%%
The ADMM algorithm proposed in \cite{BingshengHe2015} for solving \eqref{eq_ADMMproblem} is given by the following iteration update:
\begin{subequations}\label{eq_ADMM}
\begin{align}
\textbf{x}(s+1) &= \arg\min\limits_{\textbf{x} \in \mathcal{X}} \left\{\mathcal{L}(\textbf{x},\textbf{y}(s),\boldsymbol{\lambda}(s)) + \frac{1}{2}\left|\left|\textbf{x} - \textbf{x}(s)\right|\right|_{\textbf{G}}^2\right\},\\
\textbf{y}(s+1) &= \arg\min\limits_{\textbf{y} \in \mathcal{Y}} \mathcal{L}(\textbf{x}(s+1),\textbf{y},\boldsymbol{\lambda}(s)),\\
\boldsymbol{\lambda}(s+1) &= \boldsymbol{\lambda}(s) - \rho \Bigl( \textbf{X}\textbf{x}(s+1) + \textbf{Y}\textbf{y}(s+1) - \textbf{z} \Bigr).
\end{align}
\end{subequations}
where $\mathcal{L}(\textbf{x},\textbf{y},\boldsymbol{\lambda})$ is the augmented Lagrangian function defined by $\mathcal{L}(\textbf{x},\textbf{y},\boldsymbol{\lambda}) = \Psi_x(\textbf{x}) + \Psi_y(\textbf{y}) + \frac{\rho}{2}\left|\left|\textbf{X}\textbf{x} + \textbf{Y}\textbf{y} - \textbf{z} - \frac{1}{\rho}\boldsymbol{\lambda}\right|\right|^2$, $\boldsymbol{\lambda}$ is the dual variable associated with the equality constraint in \eqref{eq_ADMMproblem}, $\rho > 0$ is a penalty parameter, and $\textbf{G}$ is a symmetric and positive semidefinite matrix.
In \cite{BingshengHe2015, WeiDeng2016, XiaoweiPan2022}, the convergence rate of the ADMM algorithm \eqref{eq_ADMM} is proved to be sublinear for arbitrarily chosen positive $\rho$. We summarize the convergence results in the following theorem:
\begin{Theorem}\label{th_ADMM}
Let $\textbf{G}$ be a positive definite matrix. Then, $\lim\limits_{s \rightarrow \infty}||\textbf{x}(s) - \textbf{x}^{opt}|| = 0$ where $\textbf{x}^{opt} = \textrm{col}\{\textbf{x}_i^{opt}: 1 \le i \le N\}$ is one optimal solution of the problem \eqref{eq_distributedOpt}. In addition, \[\left|\sum_{i = 1}^{N}\left(\frac{1}{2}\textbf{x}_i(s)^T\textbf{W}_i\textbf{x}_i(s) + \textbf{w}_i^T\textbf{x}_i(s)\right) - \Psi^{opt}\right| = o\left(\frac{1}{\sqrt{s}}\right).\]
where $\Psi^{opt}$ is the optimal cost value of the problem \eqref{eq_distributedOpt}
\end{Theorem}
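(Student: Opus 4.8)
The statement collects standard convergence guarantees for the proximal ADMM scheme \eqref{eq_ADMM}, so my plan is to cast problem \eqref{eq_ADMMproblem} into the two-block separable form analyzed in \cite{BingshengHe2015, WeiDeng2016} and then invoke their variational-inequality machinery. First I would record that \eqref{eq_ADMM} fits that framework: the block function $\Psi_x$ is a convex quadratic because each $\textbf{W}_i \succeq 0$, $\Psi_y \equiv 0$ is trivially convex, the coupling is the affine equality $\textbf{X}\textbf{x} + \textbf{Y}\textbf{y} = \textbf{z}$ with $\textbf{Y} = -\textbf{I}$ of full column rank, and the feasible sets $\mathcal{X}, \mathcal{Y}$ are polyhedral. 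The assumption that each $\textbf{U}_i$ is full row rank guarantees that every $\mathcal{X}_i$ is a nonempty affine set and that the $\textbf{x}$-subproblem \eqref{eq_ADMM}a, an equality-constrained quadratic regularized by the positive-definite proximal term, is uniquely solvable. Since every constraint defining \eqref{eq_distributedOpt} is affine, the linear constraint qualification holds automatically, so whenever the problem is feasible an optimal primal solution $\textbf{x}^{opt}$ exists together with a KKT triple (a saddle point of the augmented Lagrangian $\mathcal{L}$); establishing this saddle point is the starting point for the remaining analysis.

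Next I would derive the contraction (Fej\'er) inequality that drives the whole argument. Writing the concatenated iterate as $\textbf{p}(s) = \mathrm{col}\{\textbf{x}(s), \textbf{y}(s), \boldsymbol{\lambda}(s)\}$ and a saddle point as $\textbf{p}^{\star}$, the He--Yuan prediction--correction reformulation of \eqref{eq_ADMM} yields an inequality of the form
\[
\|\textbf{p}(s+1) - \textbf{p}^{\star}\|_{\textbf{H}}^2 \le \|\textbf{p}(s) - \textbf{p}^{\star}\|_{\textbf{H}}^2 - \|\textbf{p}(s) - \textbf{p}(s+1)\|_{\textbf{H}}^2,
\]
where $\textbf{H}$ is a symmetric positive semidefinite metric built from $\rho$, $\textbf{X}$, $\textbf{Y}$, and, crucially, the proximal matrix $\textbf{G}$. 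The hypothesis $\textbf{G} \succ 0$ is exactly what makes the $\textbf{x}$-block of $\textbf{H}$ positive definite, so that control of $\|\textbf{p}(s) - \textbf{p}^{\star}\|_{\textbf{H}}$ translates into control of $\|\textbf{x}(s) - \textbf{x}^{opt}\|$; without it one would recover convergence only of the dual and auxiliary variables. Summing the inequality over $s$ shows the sequence is bounded and that $\sum_{s} \|\textbf{p}(s) - \textbf{p}(s+1)\|_{\textbf{H}}^2 < \infty$, whence $\textbf{p}(s) - \textbf{p}(s+1) \to \textbf{0}$. A standard limit-point argument then identifies every cluster point as a saddle point and upgrades this to $\textbf{p}(s) \to \textbf{p}^{\star}$, giving the first claim $\|\textbf{x}(s) - \textbf{x}^{opt}\| \to 0$.

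For the $o(1/\sqrt{s})$ objective-gap rate I would combine the summability just obtained with the monotonicity of successive differences. Using the nonexpansiveness of the ADMM operator one shows that $\|\textbf{p}(s) - \textbf{p}(s+1)\|_{\textbf{H}}$ is nonincreasing in $s$; together with $\sum_{s} \|\textbf{p}(s) - \textbf{p}(s+1)\|_{\textbf{H}}^2 < \infty$ this forces $\|\textbf{p}(s) - \textbf{p}(s+1)\|_{\textbf{H}} = o(1/\sqrt{s})$, since a nonincreasing square-summable sequence $a_s$ satisfies $\tfrac{s}{2}a_s^2 \le \sum_{k \ge \lceil s/2 \rceil} a_k^2 \to 0$. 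Finally, the variational-inequality characterization of the ADMM step lets me bound the primal objective gap $\bigl|\Psi_x(\textbf{x}(s)) - \Psi^{opt}\bigr|$ by a constant multiple of $\|\textbf{p}(s) - \textbf{p}(s+1)\|_{\textbf{H}}$, using the boundedness of the iterates and of the dual optimal set, which yields the stated rate.

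I expect the main obstacle to be this last transfer from the successive-difference rate to the objective gap. The ergodic analysis of the cited works delivers an $O(1/s)$ rate only for time-averaged iterates, so obtaining the pointwise $o(1/\sqrt{s})$ bound requires the monotonicity argument together with a careful accounting of the cross terms in the VI estimate rather than a direct telescoping; in particular one must control the gap from both sides, which is where the boundedness of the dual sequence and the positive definiteness of $\textbf{G}$ re-enter.
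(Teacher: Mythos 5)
Your sketch is correct and takes essentially the same route as the paper, which offers no proof of its own for Theorem~\ref{th_ADMM} but simply summarizes and cites \cite{BingshengHe2015, WeiDeng2016, XiaoweiPan2022}; your prediction--correction/Fej\'er-monotonicity outline (contraction in the $\textbf{G}$-augmented metric $\textbf{H}$, square-summability plus monotonicity of the successive differences yielding the non-ergodic $o(1/\sqrt{s})$ rate, and the two-sided variational-inequality bound transferring that rate to the objective gap) is precisely the content of those references. No gaps worth flagging.
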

%%%%%%%%%%%%%%%%%%%%%%%%%%%%%%%%%%%%%%%%%%%%%%%%%%%%%%%%%%%%%%%%%%%%%%%%%%%%%%%%%%%%%%%%%%
\subsubsection{Detailed update}
%%%%%%%%%%%%%%%%%%%%%%%%%%%%%%%%%%%%%%%%%%%%%%%%%%%%%%%%%%%%%%%%%%%%%%%%%%%%%%%%%%%%%%%%%%
Let $\boldsymbol{\lambda}_i$ and $\boldsymbol{\lambda}_{ij} \forall \mathcal{S}_j \in \mathcal{N}_{\mathcal{S}_i}, \forall i = 1, 2, \dots, N$, be the dual variables corresponding to equality constraints in \eqref{eq_distributedOpt_ADMM}, respectively. We have the detailed form of the augmented Lagrangian function as $\mathcal{L}(\textbf{x},\textbf{y},\boldsymbol{\lambda}) = \sum_{i = 1}^{N} \mathcal{L}_i(\textbf{x}_i,\tilde{\textbf{y}}_i,\tilde{\boldsymbol{\lambda}}_i)$ where $\tilde{\textbf{y}}_i = \textrm{col}\{\textbf{y}_i, \textrm{col}\{\textbf{y}_{ij}: \mathcal{S}_j \in \mathcal{N}_{\mathcal{S}_i}\}\}$, $\tilde{\boldsymbol{\lambda}}_i = \textrm{col}\{\boldsymbol{\lambda}_i, \textrm{col}\{\boldsymbol{\lambda}_{ij}: \mathcal{S}_j \in \mathcal{N}_{\mathcal{S}_i}\}\}$ and $\mathcal{L}_i(\textbf{x}_i,\tilde{\textbf{y}}_i,\tilde{\boldsymbol{\lambda}}_i) = \frac{1}{2}\textbf{x}_i^T\textbf{W}_i\textbf{x}_i + \textbf{w}_i^T\textbf{x}_i + \frac{\rho}{2}\left(\left|\left|\textbf{V}_i\textbf{x}_i - \textbf{y}_i - \textbf{v}_i - \frac{1}{\rho}\boldsymbol{\lambda}_i\right|\right|^2 + \sum\limits_{\mathcal{S}_j \in \mathcal{N}_{\mathcal{S}_i}} \left|\left|\textbf{U}_{ij}\textbf{x}_i - \textbf{y}_{ij} - \frac{1}{\rho}\boldsymbol{\lambda}_{ij}\right|\right|^2\right)$.
In this paper, we choose the matrix $\textbf{G} = \textrm{blkdiag}\{\textbf{G}_i: 1 \le i \le N\}$ where $\textbf{G}_i$ is a positive definite matrix.
Due to the separation of the augmented Lagrangian function as $\mathcal{L}(\textbf{x},\textbf{y},\boldsymbol{\lambda})$ and the chosen matrix $\textbf{G}$, the equations (\ref{eq_ADMM}a) and (\ref{eq_ADMM}b) are equivalent to the equations \eqref{eq_updatelaw_primal_1} and \eqref{eq_updatelaw_primal_2}, $\forall i = 1, 2, \dots, N$, respectively.
\begin{align}
\textbf{x}_i(s+1) = \arg\min\limits_{\textbf{U}_i\textbf{x}_i = \textbf{u}_i} &\Biggl\{\frac{1}{2}\textbf{x}_i^T\textbf{W}_i\textbf{x}_i + \textbf{w}_i^T\textbf{x}_i + \frac{\rho}{2}\left(\left|\left|\textbf{V}_i\textbf{x}_i - \textbf{y}_i - \textbf{v}_i - \frac{1}{\rho}\boldsymbol{\lambda}_i\right|\right|^2 + \sum\limits_{\mathcal{S}_j \in \mathcal{N}_{\mathcal{S}_i}} \left|\left|\textbf{U}_{ij}\textbf{x}_i - \textbf{y}_{ij} - \frac{1}{\rho}\boldsymbol{\lambda}_{ij}\right|\right|^2\right)\nonumber\\ &+ \frac{1}{2}\Bigl|\Bigl|\textbf{x}_i - \textbf{x}_i(s)\Bigr|\Bigr|_{\textbf{G}_i}\Biggr\}.\label{eq_updatelaw_primal_1}
\end{align}
\begin{subequations}\label{eq_updatelaw_primal_2}
\begin{align}
\textbf{y}_i(s+1) &= \arg\min\limits_{\textbf{y}_i \le \textbf{0}} \left|\left|\textbf{V}_i\textbf{x}_i(s+1) - \textbf{y}_i - \textbf{v}_i - \frac{1}{\rho}\boldsymbol{\lambda}_i(s)\right|\right|^2,\\
(\textbf{y}_{ij}, \textbf{y}_{ji})(s+1) &= \arg\min\limits_{\textbf{y}_{ij} = \textbf{y}_{ji}} \left(\left|\left|\textbf{U}_{ij}\textbf{x}_i(s+1) - \textbf{y}_{ij} - \frac{1}{\rho}\boldsymbol{\lambda}_{ij}(s)\right|\right|^2 + \left|\left|\textbf{U}_{ji}\textbf{x}_j(s+1) - \textbf{y}_{ji} - \frac{1}{\rho}\boldsymbol{\lambda}_{ji}(s)\right|\right|^2\right), \forall \mathcal{S}_j \in \mathcal{N}_{\mathcal{S}_i}.
\end{align}
\end{subequations}

Define the matrix $\tilde{\textbf{W}}_i$ and the vector $\tilde{\textbf{w}}_i(s)$ as in (\ref{eq_updatelaw_detailed}a) for all $i = 1, 2, \dots, N$. The KKT conditions for the optimization problem in \eqref{eq_updatelaw_primal_1} are given as follows.
\begin{subequations}\label{eq_tempKKTconditions}
\begin{align}
\tilde{\textbf{W}}_i\textbf{x}_i(s+1) - \tilde{\textbf{w}}_i(s) + \textbf{U}_i^T\boldsymbol{\mu}_i &= \textbf{0},\\
\textbf{U}_i\textbf{x}_i &= \textbf{u}_i.
\end{align}
\end{subequations}
where $\boldsymbol{\mu}_i$ is the dual variable corresponding to the equality constraint $\textbf{U}_i\textbf{x}_i = \textbf{u}_i$.
By solving the linear equation \eqref{eq_tempKKTconditions}, we obtain $\textbf{x}_i(s+1) = \tilde{\textbf{W}}_i^{-1}\left(\tilde{\textbf{w}}_i(s) - \textbf{U}_i^T\boldsymbol{\mu}_i(s)\right)$ and $\boldsymbol{\mu}_i(s) = \left(\textbf{U}_i\tilde{\textbf{W}}_i^{-1}\textbf{U}_i^T\right)^{-1}\left(\textbf{U}_i\tilde{\textbf{W}}_i^{-1}\tilde{\textbf{w}}_i(s) - \textbf{u}_i\right)$. Then we have the update for $\textbf{x}_i(s+1)$ as in (\ref{eq_updatelaw_detailed}a).
Using similar analysis, the optimization problem in (\ref{eq_updatelaw_primal_2}b) has the optimal solution $\textbf{y}_{ij}(s+1) = \textbf{y}_{ji}(s+1)$ given in (\ref{eq_updatelaw_detailed}d).
Consider (\ref{eq_updatelaw_primal_2}a), we have $\textbf{y}_i(s+1)$ is the projection of the point $\left(\textbf{V}_i\textbf{x}_i(s+1) - \textbf{v}_i - \frac{1}{\rho}\boldsymbol{\lambda}_i(s)\right)$ onto the set $\mathcal{Y}_i = \{\textbf{y}_i: \textbf{y}_i \le \textbf{0}\}$. So, $\textbf{y}_i(s+1)$ given in (\ref{eq_updatelaw_detailed}b) is the optimal solution for the problem in (\ref{eq_updatelaw_primal_2}a). 
Derived from (\ref{eq_ADMM}c), the detailed equations for updating dual variables $\boldsymbol{\lambda}_i$ and $\boldsymbol{\lambda}_{ij}(s), \forall \mathcal{S}_j \in \mathcal{N}_{\mathcal{S}_i},$ are given by (\ref{eq_updatelaw_detailed}c) and (\ref{eq_updatelaw_detailed}e), respectively.
\begin{subequations}\label{eq_updatelaw_detailed}
\begin{align}
\hat{\textbf{x}}_i(s+1) &= \tilde{\textbf{W}}_i^{-1}\left(\textbf{w}_i(s) - \textbf{U}_i^T\left(\textbf{U}_i\tilde{\textbf{W}}_i^{-1}\textbf{U}_i^T\right)^{-1}\left(\textbf{U}_i\tilde{\textbf{W}}_i^{-1}\tilde{\textbf{w}}_i(s) - \textbf{u}_i\right)\right),\\
\textrm{where }\tilde{\textbf{W}}_i &= \textbf{G}_i + \rho\Bigl(\textbf{V}_i^T\textbf{V}_i + \sum_{\mathcal{S}_j \in \mathcal{N}_{\mathcal{S}_i}}\textbf{U}_{ij}^T\textbf{U}_{ji}\Bigr),\nonumber\\
\tilde{\textbf{w}}_i(s) &= -\textbf{w}_i + \textbf{G}_i\textbf{x}_i(s) - \textbf{V}_i^T\left(\rho\textbf{y}_i + \textbf{v}_i(s) + \boldsymbol{\lambda}_i(s)\right) + \sum_{\mathcal{S}_j \in \mathcal{N}_{\mathcal{S}_i}}\textbf{U}_{ij}^T\left(\rho\textbf{y}_{ij}(s) + \boldsymbol{\lambda}_{ij}(s)\right),\nonumber\\
\textbf{y}_i(s+1) &= \min\left\{\textbf{0}, \textbf{V}_i\textbf{x}_i(s+1) - \textbf{v}_i - \frac{1}{\rho}\boldsymbol{\lambda}_i(s)\right\},\\
\boldsymbol{\lambda}_i(s+1) &= \boldsymbol{\lambda}_i(s) - \rho\left(\textbf{V}_i\textbf{x}_i(s+1) - \textbf{y}_i(s+1) - \textbf{v}_i\right),\\
\textbf{y}_{ij}(s+1) &= \frac{1}{2}\left(\textbf{U}_{ij}\textbf{x}_i(s+1) - \frac{1}{\rho}\boldsymbol{\lambda}_{ij}(s) + \textbf{U}_{ji}\textbf{x}_j(s+1) - \frac{1}{\rho}\boldsymbol{\lambda}_{ji}(s)\right), \forall \mathcal{S}_j \in \mathcal{N}_{\mathcal{S}_i},\\
\boldsymbol{\lambda}_{ij}(s+1) &= \boldsymbol{\lambda}_{ij}(s) - \rho\left(\textbf{U}_{ij}\textbf{x}_i(s+1) - \textbf{y}_{ij}(s+1)\right), \forall \mathcal{S}_j \in \mathcal{N}_{\mathcal{S}_i}.
\end{align}
\end{subequations}
%%%%%%%%%%%%%%%%%%%%%%%%%%%%%%%%%%%%%%%%%%%%%%%%%%%%%%%%%%%%%%%%%%%%%%%%%%%%%%%%%%%%%%%%%%
\subsubsection{ADMM-based distributed algorithm}
%%%%%%%%%%%%%%%%%%%%%%%%%%%%%%%%%%%%%%%%%%%%%%%%%%%%%%%%%%%%%%%%%%%%%%%%%%%%%%%%%%%%%%%%%%
According Theorem \ref{th_ADMM}, the convergence of the ADMM-based update law \eqref{eq_updatelaw_detailed} is asymptotic. To implement this update method in real-time application, a stopping criteria is necessary.
In this paper, we use the equation \eqref{eq_terminated}, $\forall i = 1, 2, \dots, N$, as the stopping condition and use the min-consensus law \eqref{eq_minconsensus} to verify this condition in distributed manner.
\begin{subequations}\label{eq_terminated}
\begin{gather}
||\textbf{V}_i\textbf{x}_i(s) - \textbf{y}_i(s) - \textbf{v}_i||_{\infty} \le tol,\\
||\textbf{U}_{ij}\textbf{x}_i(s) - \textbf{y}_{ij}(s)||_{\infty} \le tol, \forall j: \mathcal{S}_i \in \mathcal{N}_j,
\end{gather}
\end{subequations}
where $tol$ is a given small positive tolerance.
\begin{equation}\label{eq_minconsensus}
fl_{i,s}(\varsigma+1) = \min\left\{fl_{j,s}(\varsigma): \mathcal{S}_j \in \mathcal{N}_{\mathcal{S}_i} \cup \{\mathcal{S}_i\}\right\}
\end{equation}
where $fl_{i,s}$ is a a flag of the agent $\mathcal{S}_i$ corresponding to the iteration $s$ of the ADMM update \eqref{eq_updatelaw_detailed}. It is initialized as $fl_{i,s}(0) = 1$ if the equation \eqref{eq_terminated} is satisfied and $fl_{i,s}(0) = 0$, otherwise.
It is guaranteed that $fl_{i,s}(N) = \min\{fl_{1,s}(0), fl_{2,s}(0), \cdots, fl_{N,s}(0) \}$ for all $i = 1, \dots, N$.
If $fl_{i,s^*}(N) = 1$ at the iteration $s^*$, the agent $\mathcal{S}_i$ knows that all conditions in \eqref{eq_terminated} are satisfied for all $i = 1, \dots, N$. Then, the update law \eqref{eq_updatelaw_detailed} can be terminated. We also set $S^{max}$ as the maximum number of iterations for the update \eqref{eq_updatelaw_detailed}.

\RestyleAlgo{ruled}
\begin{algorithm}
\caption{Distributed implementation of agent $\mathcal{S}_i$ for solving the optimization problem \eqref{eq_distributedOpt}. $\textrm{DistSol}(\textbf{W}_i, \textbf{w}_i,\textbf{U}_i, \textbf{u}_i, \textbf{V}_i, \textbf{v}_i, \{\textbf{U}_{ij}: \mathcal{S}_j \in \mathcal{N}_{\mathcal{S}_i}\}, \mathcal{N}_{\mathcal{S}_i})$}\label{alg_proposed_opt}
\KwData{Matrices $\textbf{W}_i, \textbf{U}_i, \textbf{V}_i, \textbf{U}_{ij} \forall \mathcal{S}_j \in \mathcal{N}_{\mathcal{S}_i}$, vectors $\textbf{w}_i, \textbf{u}_i, \textbf{v}_i$, set of neighbors $\mathcal{N}_{\mathcal{S}_i}$}
 \textit{Initialization:} choose arbitrarily $\textbf{x}_i(0), \textbf{y}_i(0), \boldsymbol{\lambda}_i(0)$ and $\textbf{y}_{ij}(0), \boldsymbol{\lambda}_{ij}(0) \forall \mathcal{S}_j \in \mathcal{N}_{\mathcal{S}_i}$\;
 \For{$s = 0, 1, \dots, S^{max}$}{
  $\textbf{x}_i(s+1) \gets (\ref{eq_updatelaw_detailed}a)$\;
  $\textbf{y}_i(s+1) \gets (\ref{eq_updatelaw_detailed}b)$; $\boldsymbol{\lambda}_i(s+1) \gets (\ref{eq_updatelaw_detailed}c)$\;
  send $\textbf{U}_{ij}\textbf{x}_i(s+1) - \frac{1}{\rho}\boldsymbol{\lambda}_{ij}(s)$ to $\mathcal{S}_j$, $\forall \mathcal{S}_j \in \mathcal{N}_{\mathcal{S}_i}$\;
  receive all $\textbf{U}_{ji}\textbf{x}_j(s+1) - \frac{1}{\rho}\boldsymbol{\lambda}_{ji}(s)$ from $\mathcal{S}_j \in \mathcal{N}_{\mathcal{S}_i}$\;
  $(\textbf{y}_{ij}, \boldsymbol{\lambda}_{ij})(s+1) \gets (\ref{eq_updatelaw_detailed}d-\ref{eq_updatelaw_detailed}e), \forall \mathcal{S}_j \in \mathcal{N}_{\mathcal{S}_i}$\;
  check condition \eqref{eq_terminated} and set $fl_{i,s}(0)$\;
  \For{$\varsigma = 0, 1, \dots, N$}{$fl_{i,s}(\varsigma+1) \gets \eqref{eq_minconsensus}$}  
  \If{$fl_{i,s}(N) == 1$}{stop and output Result\;}
 }
\KwResult{$\textbf{x}_i(s)$}
\end{algorithm}

To conclude this part, we provide Algorithm \ref{alg_proposed_opt} as distributed method for each agent $\mathcal{S}_i$ to find the optimal solution of the optimization problem \eqref{eq_distributedOpt}. We refer the process of running this algorithm as \[\textrm{DistSol}(\textbf{W}_i, \textbf{w}_i,\textbf{U}_i, \textbf{u}_i, \textbf{V}_i, \textbf{v}_i, \{\textbf{U}_{ij}: \mathcal{S}_j \in \mathcal{N}_{\mathcal{S}_i}\}, \mathcal{N}_{\mathcal{S}_i}).\]
This algorithm is fully distributed since every agent is required to use local information belonging to itself or received from its neighbors.
%%%%%%%%%%%%%%%%%%%%%%%%%%%%%%%%%%%%%%%%%%%%%%%%%%%%%%%%%%%%%%%%%%%%%%%%%%%%%%%%%%%%%%%%%%
\subsection{Proof of Lemma \ref{lm_TSC}}
%%%%%%%%%%%%%%%%%%%%%%%%%%%%%%%%%%%%%%%%%%%%%%%%%%%%%%%%%%%%%%%%%%%%%%%%%%%%%%%%%%%%%%%%%%
Assume that $\hat{\textbf{x}}^{opt}$ is not an optimal solution of the problem \eqref{eq_TSC_compact}. Then there exists at least one vector $\hat{\textbf{x}}^{TSC} = \textrm{col}\{\hat{\textbf{x}}_i^{TSC}: 1 \le i \le N\}$ that satisfies all constraints in \eqref{eq_TSC_compact} and $\sum_{i = 1}^{N}\left(\frac{1}{2}\left(\hat{\textbf{x}}_i^{TSC}\right)^T\textbf{H}_i\hat{\textbf{x}}_i^{TSC} + \textbf{h}_i^T\hat{\textbf{x}}_i^{TSC}\right) < \sum_{i = 1}^{N}\left(\frac{1}{2}\left(\hat{\textbf{x}}_i^{opt}\right)^T\textbf{H}_i\hat{\textbf{x}}_i^{opt} + \textbf{h}_i^T\hat{\textbf{x}}_i^{opt}\right)$.

Let $\hat{u}_{ij}$ be a variable estimated by the agent $\mathcal{S}_i$ for one of its neighbor $\mathcal{S}_j \in \mathcal{N}_{\mathcal{S}_i}$. Consider the following linear algebraic equation
\begin{equation}\label{eq_temp_proof}
\sum_{\mathcal{S}_j \in \mathcal{N}_{\mathcal{S}_i}}\left(\hat{u}_{ij} - \hat{u}_{ji}\right) = \textbf{c}_i^T\left(\textbf{x}_i^{(PC)}-\hat{\textbf{x}}_i^{TSC}\right), \forall i = 1, 2, \dots, N.
\end{equation}
The linear algebraic equation \eqref{eq_temp_proof} consists of $\sum_{i = 1}^{N}|\mathcal{N}_{\mathcal{S}_i}|$ unknown variables, i.e., $\hat{u}_{ij}, \forall \mathcal{S}_j \in \mathcal{N}_{\mathcal{S}_i}, \forall i = 1, 2, \dots, N$, and $N$ linear equations. Here, $|\mathcal{N}_{\mathcal{S}_i}|$ is the number of neighboring agents of the agent $\mathcal{S}_i$. By Assumption \ref{as_graph} and Assumption \ref{as_communication}, we have $\sum_{i = 1}^{N}|\mathcal{N}_{\mathcal{S}_i}| > N$. This guarantees that \eqref{eq_temp_proof} has infinite solutions.

Let $\{u_{ij}^{TSC}: \mathcal{S}_j \in \mathcal{N}_{\mathcal{S}_i}, i = 1, 2, \dots, N\}$ be one solution of the linear algebraic equation \eqref{eq_temp_proof}. Consider the vector $\tilde{\textbf{x}}^{TSC} = \textrm{col}\{\tilde{\textbf{x}}_i^{TSC}: 1 \le i \le N\}$ where $\tilde{\textbf{x}}_i^{TSC}$ is constructed as $\tilde{\textbf{x}}_i^{TSC} = [\hat{\textbf{x}}_i^{TSC}, \textrm{col}\{u_{ij}^{TSC} - u_{ji}^{TSC}: \mathcal{S}_j \in \mathcal{N}_{\mathcal{S}_i}\}]$.
It is no doubt that the vector $\tilde{\textbf{x}}^{TSC}$ satisfies all constraints in (\ref{eq_TSC_tosolve}b-\ref{eq_TSC_tosolve}c). So $\tilde{\textbf{x}}^{TSC}$ is a feasible solution of the problem \eqref{eq_TSC_compact}. Beside that $\tilde{\textbf{x}}^{TSC}$ has the cost value as $\sum_{i = 1}^{N}\left(\frac{1}{2}\left(\hat{\textbf{x}}_i^{TSC}\right)^T\textbf{H}_i\hat{\textbf{x}}_i^{TSC} + \textbf{h}_i^T\hat{\textbf{x}}_i^{TSC}\right)$, which is smaller than the optimal cost value corresponding to $\tilde{\textbf{x}}^{opt}$. This is a contradiction. So $\hat{\textbf{x}}^{opt}$ must be an optimal solution of the problem \eqref{eq_TSC_compact}.
%%%%%%%%%%%%%%%%%%%%%%%%%%%%%%%%%%%%%%%%%%%%%%%%%%%%%%%%%%%%%%%%%%%%%%%%%%%%%%%%%%%%%%%%%%
\subsection{Illustrative example}
%%%%%%%%%%%%%%%%%%%%%%%%%%%%%%%%%%%%%%%%%%%%%%%%%%%%%%%%%%%%%%%%%%%%%%%%%%%%%%%%%%%%%%%%%%
Consider an urban traffic network shown in Fig. \ref{fig_exampleUTN}. It consists of $4$ internal junctions and $7$ boundary junctions.
%%%%%%%%%%%%%%%%%%%%%%%%%%%%%%%%%%%%%%%%%%%%%%%%%%%%%%%%%%%%%%%%%%%%%%%%%%%%%%%%%%%%%%%%%%
\begin{figure}[htb]
\centering
\includegraphics[width=0.3\textwidth]{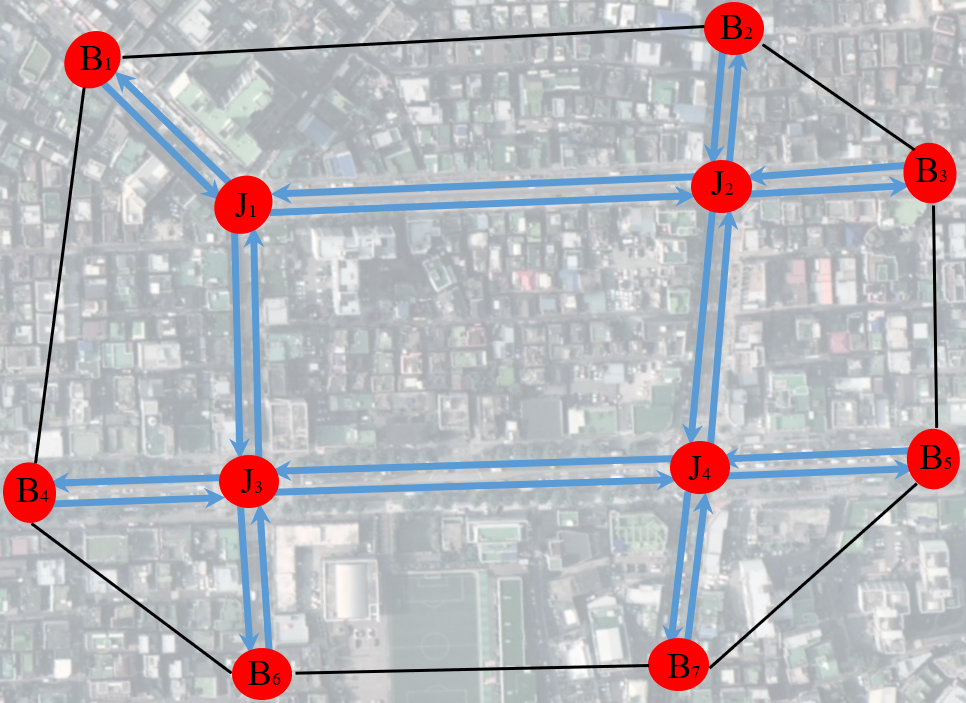}
\caption{An example for urban traffic network. (Source image from Google Earth Map). Red ellipses represent junctions, blue arrows represent roads, and black lines correspond to the perimeter of the UTN.} \label{fig_exampleUTN}
\end{figure}
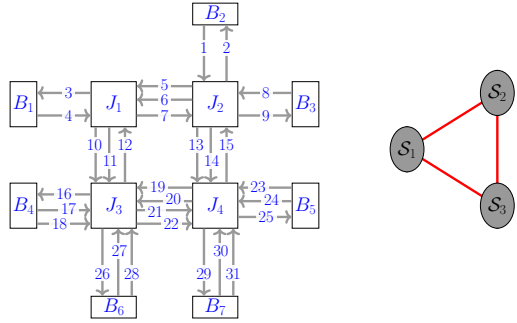
\begin{figure}[htb]
\centering
\scalebox{0.3}{\begin{tikzpicture}[
textnode/.style={rectangle, fill=white, opacity=0.9, minimum size=1mm, text=blue,text opacity=1,font=\fontsize{20}{20}\selectfont},
internalnode/.style={rectangle, draw=black, fill=white,  very thick, minimum size=20mm, text=blue,text opacity=1,font=\fontsize{25}{25}\selectfont},
boundarynode1/.style={rectangle, draw=black, fill=white,  very thick, minimum width=20mm, minimum height=6mm, text=blue,text opacity=1,font=\fontsize{25}{25}\selectfont},
boundarynode2/.style={rectangle, draw=black, fill=white,  very thick, minimum width=6mm, minimum height=20mm, text=blue,text opacity=1,font=\fontsize{25}{25}\selectfont},
framenode/.style={rectangle, fill=white, opacity=0.0, minimum size=150mm, text=blue},
agentnode/.style={ellipse, draw=black, fill=black!40,  very thick, minimum width=6mm, minimum height=20mm, text=black,text opacity=1,font=\fontsize{25}{25}\selectfont},
framenode/.style={rectangle, fill=white, opacity=0.0, minimum size=150mm, text=blue},
]
%Nodes
\node at (2.25,9.5) [framenode] (nFrame) {};
\node at (1,11.5) [internalnode] (n1) {$J_{1}$};
\node at (5.5,11.5) [internalnode] (n2) {$J_{2}$};
\node at (1,7) [internalnode] (n3) {$J_{3}$};
\node at (5.5,7) [internalnode] (n4) {$J_{4}$};
\node at (-3,11.5) [boundarynode2] (nB1) {$B_1$};
\node at (5.5,15.5) [boundarynode1] (nB2) {$B_2$};
\node at (9.5,11.5) [boundarynode2] (nB3) {$B_3$};
\node at (-3,7) [boundarynode2] (nB4) {$B_4$};
\node at (9.5,7) [boundarynode2] (nB5) {$B_5$};
\node at (1,2.5) [boundarynode1] (nB6) {$B_6$};
\node at (5.5,2.5) [boundarynode1] (nB7) {$B_7$};
%Node agents
\node at (14, 9.5) [agentnode] (nS1) {};
\node at (18, 12) [agentnode] (nS2) {};
\node at (18, 7) [agentnode] (nS3) {};
\draw[-,{line width=3pt},red] (nS1.center)--(nS2.center);
\draw[-,{line width=3pt},red] (nS2.center)--(nS3.center);
\draw[-,{line width=3pt},red] (nS3.center)--(nS1.center);
\node at (14, 9.5) [agentnode] (nS1draw) {$\mathcal{S}_1$};
\node at (18, 12) [agentnode] (nS2draw) {$\mathcal{S}_2$};
\node at (18, 7) [agentnode] (nS3draw) {$\mathcal{S}_3$};
%Arrows
\draw[->,{line width=3pt},black!40, transform canvas={xshift=-5mm}] (nB2.south)--(nB2.south|-n2.north);
\draw[<-,{line width=3pt},black!40, transform canvas={xshift=5mm}] (nB2.south)--(nB2.south|-n2.north);

\draw[<-,{line width=3pt},black!40, transform canvas={yshift=5mm}] (nB1.east)->(n1.west);
\draw[->,{line width=3pt},black!40, transform canvas={yshift=-5mm}] (nB1.east)->(n1.west);
\draw[<-,{line width=3pt},black!40, transform canvas={yshift=2mm}] (n1.east)->(n2.west);
\draw[<-,{line width=3pt},black!40, transform canvas={yshift=8mm}] (n1.east)->(n2.west);
\draw[->,{line width=3pt},black!40, transform canvas={yshift=-5mm}] (n1.east)->(n2.west);
\draw[<-,{line width=3pt},black!40, transform canvas={yshift=5mm}] (n2.east)->(nB3.west);
\draw[->,{line width=3pt},black!40, transform canvas={yshift=-5mm}] (n2.east)->(nB3.west);

\draw[->,{line width=3pt},black!40, transform canvas={xshift=-8mm}] (n1.south)--(n1.south|-n3.north);
\draw[->,{line width=3pt},black!40, transform canvas={xshift=-2mm}] (n1.south)--(n1.south|-n3.north);
\draw[<-,{line width=3pt},black!40, transform canvas={xshift=5mm}] (n1.south)--(n1.south|-n3.north);
\draw[->,{line width=3pt},black!40, transform canvas={xshift=-8mm}] (n2.south)--(n2.south|-n4.north);
\draw[->,{line width=3pt},black!40, transform canvas={xshift=-2mm}] (n2.south)--(n2.south|-n4.north);
\draw[<-,{line width=3pt},black!40, transform canvas={xshift=5mm}] (n2.south)--(n2.south|-n4.north);

\draw[<-,{line width=3pt},black!40, transform canvas={yshift=5mm}] (nB4.east)->(n3.west);
\draw[->,{line width=3pt},black!40, transform canvas={yshift=-2mm}] (nB4.east)->(n3.west);
\draw[->,{line width=3pt},black!40, transform canvas={yshift=-8mm}] (nB4.east)->(n3.west);
\draw[<-,{line width=3pt},black!40, transform canvas={yshift=8mm}] (n3.east)->(n4.west);
\draw[<-,{line width=3pt},black!40, transform canvas={yshift=2mm}] (n3.east)->(n4.west);
\draw[->,{line width=3pt},black!40, transform canvas={yshift=-2mm}] (n3.east)->(n4.west);
\draw[->,{line width=3pt},black!40, transform canvas={yshift=-8mm}] (n3.east)->(n4.west);
\draw[<-,{line width=3pt},black!40, transform canvas={yshift=8mm}] (n4.east)->(nB5.west);
\draw[<-,{line width=3pt},black!40, transform canvas={yshift=2mm}] (n4.east)->(nB5.west);
\draw[->,{line width=3pt},black!40, transform canvas={yshift=-5mm}] (n4.east)->(nB5.west);

\draw[->,{line width=3pt},black!40, transform canvas={xshift=-5mm}] (n3.south)--(n3.south|-nB6.north);
\draw[<-,{line width=3pt},black!40, transform canvas={xshift=2mm}] (n3.south)--(n3.south|-nB6.north);
\draw[<-,{line width=3pt},black!40, transform canvas={xshift=8mm}] (n3.south)--(n3.south|-nB6.north);
\draw[->,{line width=3pt},black!40, transform canvas={xshift=-5mm}] (n4.south)--(n4.south|-nB7.north);
\draw[<-,{line width=3pt},black!40, transform canvas={xshift=2mm}] (n4.south)--(n4.south|-nB7.north);
\draw[<-,{line width=3pt},black!40, transform canvas={xshift=8mm}] (n4.south)--(n4.south|-nB7.north);
%Links
\node at (5.0,14.0) [textnode] (nL1) {$1$};
\node at (6.0,14.0) [textnode] (nL2) {$2$};

\node at (-1,12) [textnode] (nL3) {$3$};
\node at (-1,11) [textnode] (nL4) {$4$};

\node at (3.25,12.35) [textnode] (nL5) {$5$};
\node at (3.25,11.7) [textnode] (nL6) {$6$};
\node at (3.25,11) [textnode] (nL7) {$7$};

\node at (7.75,12) [textnode] (nL8) {$8$};
\node at (7.75,11) [textnode] (nL9) {$9$};

\node at (0.15,9.7) [textnode] (nL10) {$10$};
\node at (0.85,9.0) [textnode] (nL11) {$11$};
\node at (1.5,9.7) [textnode] (nL12) {$12$};

\node at (4.65,9.7) [textnode] (nL13) {$13$};
\node at (5.35,9.0) [textnode] (nL14) {$14$};
\node at (6.0,9.7) [textnode] (nL15) {$15$};

\node at (-1.2,7.55) [textnode] (nL6) {$16$};
\node at (-1,6.85) [textnode] (nL17) {$17$};
\node at (-1.4,6.2) [textnode] (nL18) {$18$};

\node at (2.95,7.85) [textnode] (nL19) {$19$};
\node at (3.65,7.3) [textnode] (nL20) {$20$};
\node at (2.85,6.75) [textnode] (nL21) {$21$};
\node at (3.55,6.2) [textnode] (nL22) {$22$};

\node at (7.4,7.85) [textnode] (nL23) {$23$};
\node at (8.0,7.25) [textnode] (nL24) {$24$};
\node at (7.75,6.5) [textnode] (nL25) {$25$};

\node at (0.5,4.0) [textnode] (nL26) {$26$};
\node at (1.25,5.0) [textnode] (nL27) {$27$};
\node at (1.8,4.0) [textnode] (nL28) {$28$};

\node at (5.0,4.0) [textnode] (nL29) {$29$};
\node at (5.75,5.0) [textnode] (nL30) {$30$};
\node at (6.3,4.0) [textnode] (nL31) {$31$};
\end{tikzpicture}}
\caption{The left figure is the graph representation for the UTN in Fig. \ref{fig_exampleUTN_graph}. The right figure is the communication graph for multiple agents controlling the UTN.} \label{fig_exampleUTN_graph}
\end{figure}
%%%%%%%%%%%%%%%%%%%%%%%%%%%%%%%%%%%%%%%%%%%%%%%%%%%%%%%%%%%%%%%%%%%%%%%%%%%%%%%%%%%%%%%%%%
\begin{table}
\begin{center}
\captionof{table}{Sequences of traffic signal phases corresponding to internal junctions in Fig. \ref{fig_exampleUTN}.}
\scalebox{0.8}
{\begin{tabular}{c|c|c|c|c|c}
\hline
Type & Junction & $1^{st}$ phase & $2^{nd}$ phase & $3^{rd}$ phase & $4^{th}$ phase\\
\hline
\rule{0pt}{15pt} 1 & $J_1$ & \includegraphics[width=0.025\textwidth, angle=90]{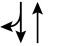} & \includegraphics[width=0.025\textwidth]{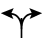}  & \includegraphics[width=0.025\textwidth, angle=90]{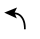} & none \\
\hline
\rule{0pt}{15pt} 2 & $J_2$ & \includegraphics[width=0.025\textwidth]{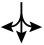} & \includegraphics[width=0.025\textwidth, angle=90]{4_phasetype3.png} & \includegraphics[width=0.025\textwidth]{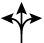} & \includegraphics[width=0.025\textwidth, angle=90]{4_phasetype4.png}\\
\hline
\rule{0pt}{15pt} 3 & $J_3, J_4$ & \includegraphics[width=0.025\textwidth]{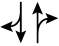} & \includegraphics[width=0.025\textwidth]{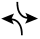}  & \includegraphics[width=0.025\textwidth, angle=90]{4_phasetype1.png} & \includegraphics[width=0.025\textwidth, angle=90]{4_phasetype2.png} \\
\hline
\end{tabular}}
\label{tbl_sequencePhases}
\end{center}
\end{table}
%%%%%%%%%%%%%%%%%%%%%%%%%%%%%%%%%%%%%%%%%%%%%%%%%%%%%%%%%%%%%%%%%%%%%%%%%%%%%%%%%%%%%%%%%%
\begin{table}
\begin{center}
\captionof{table}{Active road links in traffic signal phases.}
\scalebox{0.8}
{\begin{tabular}{c|c|c|c|c}
\hline
Junction & $1^{st}$ phase & $2^{nd}$ phase & $3^{rd}$ phase & $4^{th}$ phase\\
\hline
\rule{0pt}{15pt} $J_1$ & $4, 5$ & $12$ & $6$ & none \\
\hline
\rule{0pt}{15pt} $J_2$ & $1$ & $7$ & $15$ & $8$\\
\hline
\rule{0pt}{15pt} $J_3$ & $10, 28$ & $11, 27$  & $18, 19$ & $17, 20$ \\
\hline
\rule{0pt}{15pt} $J_4$ & $13, 31$ & $14, 30$ & $22, 23$ & $21, 24$ \\
\hline
\end{tabular}}
\label{tbl_LinkPhases}
\end{center}
\end{table}
%%%%%%%%%%%%%%%%%%%%%%%%%%%%%%%%%%%%%%%%%%%%%%%%%%%%%%%%%%%%%%%%%%%%%%%%%%%%%%%%%%%%%%%%%%
Assume that the sequences of traffic signal phases in internal junctions are given in TABLE. \ref{tbl_sequencePhases}.
Then we have the graph representation of this UTN as the left figure in Fig. \ref{fig_exampleUTN_graph}. In which, the gray arrows illustrate road links in $\mathcal{R}$, the squares represent internal junctions in $\mathcal{J}^I$, and the rectangles correspond to the boundary junctions in the set $\mathcal{J}^B$. The graph $\mathcal{T} = (\mathcal{J}, \mathcal{R})$ includes the road link set $\mathcal{R} = \{1, 2, \dots, 31\}$ and the junction set $\mathcal{J} = \mathcal{J}^B \cup \mathcal{J}^I$. In which, $\mathcal{J}^B = \{B_{1}, B_{2}, B_{3}, B_{4}, B_{5}, B_{6}, B_{7}\}$ is the set of boundary junctions and $\mathcal{J}^I = \{J_{1}, J_{2}, J_{3}, J_{4}\}$ is the set of internal junctions. Road links $6, 11, 14, 17, 20, 21, 24, 27$, $30$ are for turning left only directions of their corresponding roads.

For the road links $10, 11$ and $12$, we have $\sigma(10) = \sigma(11) = J_1, \tau(10) = \tau(11) = J_3$, $\mathcal{N}_{10}^+ = \mathcal{N}_{11}^+ = \{4, 5, 6\}$, $\mathcal{N}_{10}^- = \{16, 26\}$, $\mathcal{N}_{11}^- = \{21, 22\}$, $\sigma(12) = J_3, \tau(12) = J_1$, $\mathcal{N}_{12}^+ = \{17, 19, 28\}$, $\mathcal{N}_{12}^- = \{3, 7\}$.
For the intersection $J_1$, we have $\mathcal{R}_{J_1}^{in} = \{4, 5, 6, 12\}$ and $\mathcal{R}_{J_1}^{out} = \{3, 7, 10, 11\}$.

TABLE. \ref{tbl_LinkPhases} indicates the road links corresponding to the traffic signal phases of the internal junctions.
Denote by $p_i^{J_v}$ the $i^{th}$-traffic signal phase of the internal junction $J_v \in \mathcal{J}^I$.
Let $g_{p_i}^{J_v}(t)$ and $\tilde{g}_z(t)$ be the split of the traffic signal phase $p_i^{J_v}$ and the green time assigned to the road link $z$ in the $t^{th}$-cycle, respectively. Consider the junction $J_3$. We have the following inequalities:
\[g_{p_1}^{J_3}(t) + g_{p_2}^{J_3}(t) + g_{p_3}^{J_3}(t) + g_{p_4}^{J_3}(t) \le T-L_{J_3},\]
\[0 \le \tilde{g}_{10}(t) \le g_{p_1}^{J_3}(t), 0 \le \tilde{g}_{28}(t) \le g_{p_1}^{J_3}(t),\]
\[0 \le \tilde{g}_{11}(t) \le g_{p_2}^{J_3}(t), 0 \le \tilde{g}_{27}(t) \le g_{p_2}^{J_3}(t),\]
\[0 \le \tilde{g}_{18}(t) \le g_{p_3}^{J_3}(t), 0 \le \tilde{g}_{19}(t) \le g_{p_3}^{J_3}(t),\]
\[0 \le \tilde{g}_{17}(t) \le g_{p_4}^{J_3}(t), 0 \le \tilde{g}_{20}(t) \le g_{p_4}^{J_3}(t).\]
Though both road links $11$ and $27$ belong to the second traffic signal phase of the junction $J_3$, i.e., $\mathcal{P}_{11} = \mathcal{P}_{27} = \{p_2^{J_3}\}$, their assigned green times are not required to be the same or equal to $g_{p_2}^{J_3}(t)$. That means it is possible to have $\tilde{g}_{11}(t) \neq \tilde{g}_{27}(t)$ and $\tilde{g}_{11}(t) < g_{p_2}^{J_3}(t)$, $\tilde{g}_{27}(t) < g_{p_2}^{J_3}(t)$.

Assuming the UTN is divided into three subnetworks $\mathcal{S}_1$, $\mathcal{S}_2$ and $\mathcal{S}_3$ where $\mathcal{J}_1^I = \{J_1, J_3\}, \mathcal{J}_1^B = \{B_1, B_4, B_6\}$, $\mathcal{J}_2^I = \{I_2\}, \mathcal{J}_2^B = \{B_2, B_3\}$, and $\mathcal{J}_3^I = \{J_4\}, \mathcal{J}_3^B = \{B_5, B_7\}$.
Then we have the sets of internal road links as
\[\mathcal{R}_{1} = \{3, 4, 10, 11, 12, 16, 17, 18, 26, 27, 28\},\]
\[\mathcal{R}_{2} = \{1, 2, 8, 9\}, \textrm{ and } \mathcal{R}_{3} = \{23, 24, 25, 29, 30, 31\}.\]
The sets of connecting road links among subnetworks are $\mathcal{R}_{12} = \{7\}, \mathcal{R}_{21} = \{5, 6\}, \mathcal{R}_{13} = \{21, 22\}, \mathcal{R}_{31} = \{19, 20\}, \mathcal{R}_{23} = \{13, 14\}, \mathcal{R}_{32} = \{15\}$.
Let each agent $\mathcal{S}_i$ controlling the subnetwork $\mathcal{S}_i, \forall i = 1, 2, 3$. We have the communication graph as shown in the right figure in Fig. \ref{fig_exampleUTN}. In which, the red lines present the communication channel among agents. The neighbor sets of agents are given as $\mathcal{N}_{\mathcal{S}_1} = \{\mathcal{S}_2, \mathcal{S}_3\}$, $\mathcal{N}_{\mathcal{S}_2} = \{\mathcal{S}_1, \mathcal{S}_3\}$, and $\mathcal{N}_{\mathcal{S}_3} = \{\mathcal{S}_1, \mathcal{S}_2\}$.
%%%%%%%%%%%%%%%%%%%%%%%%%%%%%%%%%%%%%%%%%%%%%%%%%%%%%%%%%%%%%%%%%%%%%%%%%%%%%%%%%%%%%%%%%%
%%%%%%%%%%%%%%%%%%%%%%%%%%%%%%%%%%%%%%%%%%%%%%%%%%%%%%%%%%%%%%%%%%%%%%%%%%%%%%%%%%%%%%%%%%
\bibliographystyle{IEEEtran}
\bibliography{mylib}
%%%%%%%%%%%%%%%%%%%%%%%%%%%%%%%%%%%%%%%%%%%%%%%%%%%%%%%%%%%%%%%%%%%%%%%%%%%%%%%%%%%%%%%%%%
%\balance

\end{document}